\newcommand{\AC}{\mathcal{AC}}
\newcommand{\floor}[1]{{\lfloor#1\rfloor}}
\newcommand{\ceil}[1]{{\lceil#1\rceil}}
\newcommand{\diam}{{\rm diam}}
\newcommand{\dist}{{\rm dist}}
\renewcommand{\span}{{\rm span}}
\newcommand{\eq}{{\rm eq}}
\newcommand{\NP}{\ensuremath{\mathcal{NP}}}
\newcommand{\coNP}{\ensuremath{{\rm co}\mathcal{NP}}}
\renewcommand{\P}{\ensuremath{\mathcal{P}}}
\newcommand{\N}{{\mathbb N}}
\newcommand{\eps}{\varepsilon}
\newcommand{\seq}{\subseteq}
\newcommand{\qed}{\rule{7pt}{7pt}}
\newtheorem{theorem}{Theorem}[section]
\newtheorem{corollary}[theorem]{Corollary}
\newtheorem{proposition}[theorem]{Proposition}
\newtheorem{definition}[theorem]{Definition}
\newtheorem{lemma}[theorem]{Lemma}
\newtheorem{claim}[theorem]{Claim}
\newtheorem{conjecture}[theorem]{Conjecture}
\newtheorem{problem}[theorem]{Problem}
\newenvironment{proof}{\noindent{\bf Proof}\hspace*{1em}}{\qed\bigskip}
\newenvironment{proof-sketch}{\noindent{\bf Sketch of Proof}\hspace*{1em}}{\qed\bigskip}
\newenvironment{proof-idea}{\noindent{\bf Proof Idea}\hspace*{1em}}{\qed\bigskip}
\newenvironment{proof-of-lemma}[1]{\noindent{\bf Proof of Lemma #1}\hspace*{1em}}{\qed\bigskip}
\newenvironment{proof-of-claim}[1]{\noindent{\bf Proof of Claim #1}\hspace*{1em}}{\qed\bigskip}
\newenvironment{proof-of-thm}[1]{\noindent{\bf Proof of Theorem #1}\hspace*{1em}}{\qed\bigskip}
\newenvironment{proof-attempt}{\noindent{\bf Proof Attempt}\hspace*{1em}}{\qed\bigskip}
\newenvironment{proofof}[1]{\noindent{\bf Proof of #1:}\hspace*{1em}}{\qed\bigskip}
\newenvironment{remark}{\noindent{\bf Remark}\hspace*{1em}}{\bigskip}
\title{Acquaintance Time of a Graph}
\newcommand*\samethanks[1][\value{footnote}]{\footnotemark[#1]}
\author{
  Itai Benjamini\thanks{
    Faculty of Mathematics and Computer Science,
    Weizmann Institute of Science, Rehovot, {\sc Israel}.
    Emails:{\tt\{itai.benjamini,igor.shinkar,gilad.tsur\}@weizmann.ac.il}.
    Igor Shinkar's research is supported by ERC grant number 239985.
    }
  \and
  Igor Shinkar\samethanks
  \and
  Gilad Tsur\samethanks
}
\begin{document}

\maketitle
\thispagestyle{empty}

\begin{abstract}

We define the following parameter of connected graphs.
For a given graph $G = (V,E)$ we place one agent in each vertex $v \in V$.
Every pair of agents sharing a common edge is declared to be acquainted.
In each round we choose some matching of $G$ (not necessarily a maximal matching),
and for each edge in the matching the agents on this edge swap places.
After the swap, again, every pair of agents sharing a common edge become acquainted,
and the process continues. We define the \emph{acquaintance time}
of a graph $G$, denoted by $\AC(G)$, to be the minimal number of rounds required
until every two agents are acquainted.

We first study the acquaintance time for some natural families of graphs
including the path, expanders, the binary tree, and the complete bipartite
graph. We also show that for all $n \in \N$
and for all positive integers $k \leq n^{1.5}$
there exists an $n$-vertex graph $G$ such that
$k/c \leq \AC(G) \leq c \cdot k$ for some universal constant $c \geq 1$.
We also prove that for all $n$-vertex connected graphs $G$ we have
$\AC(G) = O\left(\frac{n^2}{\log(n)/\log\log(n)}\right)$,
thus improving the trivial upper bound of $O(n^2)$
achieved by sequentially letting each agent perform depth-first search
along some spanning tree of $G$.

Studying the computational complexity of this problem, we prove that for any
constant $t \geq 1$ the problem of deciding that a given graph $G$ has
$\AC(G) \leq t$ or $\AC(G) \geq 2t$ is $\NP$-complete. That is, $\AC(G)$ is
$\NP$-hard to approximate within multiplicative factor of 2, as well as within
any additive constant factor.

On the algorithmic side, we give a deterministic algorithm that given
an $n$-vertex graph $G$ with $\AC(G)=1$ finds a strategy for
acquaintance that consists of $\ceil{n/c}$ matchings in time $n^{c+O(1)}$.
We also design a randomized polynomial time algorithm that given
an $n$-vertex graph $G$ with $\AC(G)=1$ finds with high probability
an $O(\log(n))$-rounds strategy for acquaintance.

\end{abstract}

%
\newpage
\pagenumbering{arabic}
\section{Introduction}\label{sec:intro}

In this work we deal with the following problem: agents walk on a graph meeting each
other, and our goal is to make every pair of agents meet as fast as possible.
Specifically, we introduce the following parameter of connected graphs.
For a given graph $G = (V,E)$ we place one agent in each vertex of the graph.
Every pair of agents sharing a common edge is declared to be acquainted.
In each round we choose some matching of $G$ (not necessarily a maximal matching),
and for each edge in the matching the agents on this edge swap places.
After the swap, again, every pair of agents sharing a common edge become acquainted,
and the process continues until every two agents are acquainted with each other.
Such a sequence is called {\em a strategy for acquaintance in $G$}.
We define the \emph{acquaintance time}
of a graph $G$, denoted by $\AC(G)$, to be the minimal number
of rounds in a strategy for acquaintance in $G$.

In order to get some feeling regarding this parameter
note that if for a given graph $G$ a list of matchings $(M_1,\dots,M_t)$
is a witness-strategy for the assertion that $\AC(G) \leq t$,
then the inverse list $(M_t,\dots,M_1)$ is also a witness-strategy for this assertion.
We remark that in general a witness-strategy is not commutative in the order of the matchings.%
\footnote{For example, let $G=(V=\{1,2,3,4\}, E=\{(1,2),(2,3),(3,4)\})$ be
the path of length 4. Then, the sequence
$(M_1=\{(1,2)\},M_2=\{(3,4)\},M_3=\{(1,2)\})$ is a strategy for acquaintance
in $G$, whereas, the sequence
$(M_1=\{(1,2)\},M_3=\{(1,2)\},M_2=\{(3,4)\})$ is not.}
For a trivial bound of $\AC(G)$ we have $\AC(G) \geq \floor{\diam(G)/2}$,
where $\diam(G)$ is the maximal distance between two vertices of $G$.
It is also easy to see that for every graph $G = (V,E)$ with $n$ vertices
it holds that $\AC(G) \geq \frac{{n \choose 2}}{|E|} - 1$.
Indeed, before the first round exactly $|E|$ pairs of agents are acquainted.
Similarly, in each round at most $|E|$ new pairs get acquainted.
This implies that $|E| + \AC(G) \cdot |E| \geq {n \choose 2}$,
since in any solution the total number of pairs that met up
to time $\AC(G)$ is ${n \choose 2}$.
For an upper bound, for every graph $G$ with $n$ vertices we have
$\AC(G) \leq 2n^2$, as every agent can meet all others by
traversing the graph along some spanning tree in at most $2n$ rounds.

Note that for $t \in \N$ the problem of deciding whether a graph $G$
has $\AC(G) \leq t$ is in $\NP$, and the natural $\NP$-witness is
a strategy for acquaintance in $G$.
This problem is different from many classical
$\NP$-complete problems, such as graph coloring or vertex cover,
in the sense that checking an $\NP$-witness for $\AC(G)$
is ``dynamic'', and involves evolution in time.
Several problems of similar
flavor have been studied in the past, including
the well studied problems of
Gossiping and Broadcasting (see the survey of
Hedetniemi, Hedetniemi, and Liestman~\cite{HHL} for details),
Collision-Free Network Exploration (see~\cite{CDGKKP}),
and the Target Set Selection Problem (see, e.g., \cite{KKT,Ch,Rei}).
One such problem of particular relevance is Routing
Permutation on Graphs via Matchings studied by Alon, Chung, and
Graham in~\cite{ACG}. In this problem the input is a graph $G=(V,E)$ and a
permutation of the vertices $\sigma: V\to V$, and the goal is to route
all agents to their respective destinations according to $\sigma$;
that is, the agent sitting originally in the vertex $v$ should be
routed to the vertex $\sigma(v)$ for all $v \in V$.
In our setting we encounter a similar routing problem, where we route
the agents from some set of vertices $S \seq V$ to some $T \seq V$
without specifying the target location in $T$ of each of the agents.


%
%
\subsection{Our results}\label{sec:our-section}

We start this work by providing asymptotic computations of the
acquaintance time for some interesting families of graphs.
For instance, if $P_n$ is the path of length $n$, then $\AC(P_n) = O(n)$,
which is tight up to a multiplicative constant, since $\diam(P_n) = n-1$.
In particular, this implies that $\AC(H) = O(n)$
for all Hamiltonian graphs $H$ with $n$ vertices.
We also prove that for constant degree expanders $G = (V,E)$
on $n$ vertices the acquaintance time is $O(n)$, which is tight,
as $|E|= O(n)$ and $\AC(G) = \Omega(\frac{n^2}{|E|})$.
More examples include the binary tree, the complete bipartite graph, and the barbell graph.

We then provide examples of graphs with different ranges of the acquaintance time.
We show in Theorem~\ref{thm:range of AC} that for all $n \in \N$
and for all positive integers $k \leq n^{1.5}$
there exists an $n$-vertex graph $G$ such that
$k/c \leq \AC(G) \leq c \cdot k$ for some universal constant $c \geq 1$.
Another interesting result says that for every connected graph $G$ with $n$ vertices
the acquaintance time is, in fact, asymptotically smaller than the trivial $O(n^2)$ bound.
Specifically, we prove in Theorem~\ref{thm:AC-bound} that for every connected graph $G$
with $n$ vertices $\AC(G) \leq \frac{c n^2}{\log(n)/\log\log(n)}$
for some absolute constant $c$.

We also study the problem of computing/approximating $\AC(G)$ for a given graph $G$.
As noted above, for $t \in \N$ the problem of deciding whether a given graph $G$
has $\AC(G) \leq t$ is in $\NP$, and the natural $\NP$-witness is a sequence
of $t$ matchings that allows every two agents to get acquainted.
We prove that the acquaintance time problem is $\NP$-complete,
by showing a reduction from the graph coloring problem.
Specifically, Theorem~\ref{thm:NP-hardness} says that
for every $t \geq 1$ it is $\NP$-hard to distinguish whether
a given graph $G$ has $\AC(G) \leq t$ or $\AC(G) \geq 2t$.
Hence, $\AC(G)$ is $\NP$-hard to approximate within a
multiplicative factor of 2, as well as within any additive constant.
In fact, we conjecture that it is $\NP$-hard to approximate $\AC$
within any multiplicative factor.

On the algorithmic side we study graphs whose acquaintance time equals to 1.
We show there is a deterministic algorithm that when
given an $n$-vertex graph $G$ with $\AC(G)=1$ finds an
$\ceil{n/c}$-rounds strategy for acquaintance in $G$ in time $n^{c+O(1)}$.
We also design a randomized polynomial time algorithm that when given
an $n$-vertex graph $G$ with $\AC(G)=1$ finds with high probability
an $O(\log(n))$-rounds strategy for acquaintance.

\subsection{Recent Developments}\label{sec:recent dev}

Recently, based on a preprint of our results posted on the internet~\cite{BST},
there have been several developments on this problem.

    Kinnersley~et~al.~\cite{KMP} proved that for all graphs $G$ with $n$
    vertices it holds that $\AC(G) = O(n^2 / \log(n))$, which improves the
    $O\left(\frac{n^2}{\log(n)/\log\log(n)}\right)$ bound in this paper.
    More recently, Angel and Shinkar~\cite{AS} proved that for all graphs $G$ with $n$
    vertices it holds that $\AC(G) = O(\Delta n)$, where $\Delta$ is the maximal degree
    of $G$. By combining this result with the bound $\AC(G) = O(n^2/ \Delta)$
    shown in Claim~\ref{claim:AC-bound maxdeg} in this
    paper it follows that for every graph $G$ with $n$
    vertices it holds that $\AC(G) = O(n^{1.5})$. This upper bound is tight up
    to a multiplicative constant as shown in Theorem~\ref{thm:range of AC}.

    Kinnersley~et~al.~\cite{KMP} also studied $\AC$ for random graphs, and
    proved that for $G(n,p)$ with $p > \frac{(1+\eps)\ln(n)}{n}$ for some $\eps>0$
    (slightly above the threshold for connectivity) it holds that
    $\AC(G(n,p)) = O(\frac{\log(n)}{p})$ with high probability.
    This result is tight up to an $O(\log(n))$ multiplicative factor since
    $\AC(G(n,p)) \geq \Omega(\frac{1}{p})$ by the trivial bound on the
    number of edges.
    In fact, it is shown in~\cite{KMP} that for $p > n^{-1/2+\eps}$ the bound
    $\AC(G(n,p)) = O(\frac{\log(n)}{p})$
    is tight up to a constant multiplicative factor.

M\"uller and Pra{\l}at~\cite{MP} studied the acquaintance time of a random subgraph
of a random geometric graph, where $n$ vertices are chosen independently
uniformly at random from $[0,1]^2$, and two vertices are adjacent with probability $p$
if the Euclidean distance between them is at most $r$. They show asymptotic results
for $\AC(G(n,r,p))$ for a wide range of $r=r(n)$ and $p=p(n)$.

\section{Definitions and Notation}\label{sec:definitions}

Throughout the paper all graphs are simple and undirected.
We use standard notations for the standard parameters of graphs.
Given a graph $G=(V,E)$ and two vertices $u,v \in V$
the distance between $u$ and $v$, denoted by
$\dist(u,v)$, is the length of a shortest path from $u$ to $v$ in $G$.
For a vertex $v$ and a set of vertices $U \seq V$ the distance of $v$ from $U$
is defined to be $\dist(v,U) = \min_{u \in U}\dist(v,u)$.
The diameter of the graph $G$, denoted by $\diam(G)$, is the maximal
distance between two vertices of the graph.
For a vertex $u \in V$ the set of neighbors of $u$ is denoted by $N(u) = \{ w \in V : (u,w) \in E\}$.
Similarly, for a set $U \seq V$ the set of neighbors of $U$
is $N(U) = \{ w \in V : (u,w) \in E \mbox{ for some } u \in U \}$.
The {\em independence number of $G$}, denoted by $\alpha(G)$,
is the cardinality of the largest independent set, that is,
a set of vertices in the graph, no two of which are adjacent.
The {\em chromatic number of $G$}, denoted by $\chi(G)$,
is the minimal number $c \in \N$ such that there is a coloring $f : V \to [c]$
of the vertices that satisfies $f(v) \neq f(w)$ for all edges $(v,w) \in E$.
The {\em equi-chromatic number of $G$}, denoted by $\chi_\eq(G)$, is the
minimal number $c \in \N$ such that there is a {\em balanced}
coloring $f : V \to [c]$ that satisfies $f(v) \neq f(w)$ for all edges $(v,w) \in E$,
where a coloring $f : V \to [c]$ is said to be balanced
if $|f^{-1}(i)| = |f^{-1}(j)|$ for all $i,j \in [c]$.

For a given graph $G = (V,E)$ the acquaintance time is defined as follows.
We place one agent in each vertex $v \in V$.
Every pair of agents sharing a common edge is declared to be acquainted.
In each round we choose a matching of $G$,
and for each edge in the matching the agents on this edge swap places.
After the swap, again, every pair of agents sharing a common edge become
acquainted, and the process continues.
A sequence of matchings in the graph is called {\em a strategy}.
A strategy that allows every pair of agents
to meet is called {\em a strategy for acquaintance in $G$}.
The acquaintance time of $G$, denoted by $\AC(G)$,
is the minimal number of rounds required for such a strategy.

As mentioned in the introduction, this problem is related to a certain routing
problem studied in~\cite{ACG}. Specifically, we are
interested in the routing task summarized in the following claim.
For a given tree $G = (V,E)$ the claim gives a strategy for fast
routing of the agents from some set
of vertices $S \seq V$ to $T \seq V$ without specifying the target
location in $T$ of each of the agents.
\begin{claim}\label{claim:routing subgraphs}
    Let $G = (V,E)$ be a tree. Let $S,T \seq V$ be two
    subsets of the vertices of equal size $k = |S| = |T|$,
    and let $\ell = \max_{v \in S,u \in T}\{ \dist(v,u)  \}$
    be the maximal distance between a vertex in $S$ and a vertex
    in $T$. Then, there is a strategy of $\ell + 2(k-1)$ matchings
    that routes all agents from $S$ to $T$.
\end{claim}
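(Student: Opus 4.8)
The plan is to treat the $k$ routed agents as \emph{indistinguishable} tokens: since the claim only requires that the tokens starting in $S$ occupy $T$ at the end, with no prescribed destination for each, it suffices to transform the indicator configuration of $S$ into that of $T$ using matchings. In this language a single round picks a matching $M$ and, along each edge of $M$ with exactly one occupied endpoint, slides that token across (edges with two or with zero tokens act trivially). First I would fix a bijection $\sigma : S \to T$ and, for each $v \in S$, the unique tree-path $P_v$ from $v$ to $\sigma(v)$; since $\dist(v,\sigma(v)) \le \ell$ for every such pair, each $P_v$ has length at most $\ell$ regardless of the choice of $\sigma$. I would then take $\sigma$ to minimize the total length $\sum_{v} |P_v|$: for such a $\sigma$ no two of the paths can traverse a common edge in opposite directions, since swapping the two offending targets would strictly reduce the total length. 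This \emph{non-crossing} property, available precisely because the tokens are interchangeable, is the structural fact that makes a clean schedule possible.

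With a non-crossing assignment in hand, I would route by pipelining along the paths. Order the pairs $(v_1,\sigma(v_1)),\dots,(v_k,\sigma(v_k))$, launch token $i$ at round $2(i-1)$, and advance it one edge along $P_{v_i}$ in each subsequent round. Then token $i$ reaches $\sigma(v_i)$ at round $2(i-1)+|P_{v_i}| \le 2(k-1)+\ell$, so the last token arrives within $\ell + 2(k-1)$ rounds, which is exactly the claimed bound. The spacing of $2$ between consecutive launches is dictated by the matching constraint: two tokens moving the same direction along a shared segment must stay at distance at least $2$, for otherwise the two edges they wish to cross in one round would meet at a common vertex and could not lie in a single matching. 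A head start of two rounds maintains such a gap for the whole journey.

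The main obstacle is verifying that in every round the attempted moves genuinely form a matching and that no two tokens ever try to enter the same vertex. Here the non-crossing choice of $\sigma$ and the launch order must be combined: I would root the tree, decompose each $P_v$ into an upward part to $\mathrm{lca}(v,\sigma(v))$ followed by a downward part, order the launches by a depth-first traversal of the root, and argue that along any edge all crossing tokens move in the same direction and in their launch order, so the two-round gaps are preserved and each round's moves are vertex-disjoint. The delicate point is the behaviour at the turning vertices $\mathrm{lca}(v,\sigma(v))$, where a token reverses direction and could momentarily conflict with a later token sharing that vertex; bounding this interaction is where the slack of $2$ per token must be spent carefully. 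As a safeguard for any configuration that resists a clean global schedule, I would fall back on induction on $k$: by the time-reversal symmetry of strategies noted in the introduction one may assume that a leaf $\lambda$ of the subtree spanning $S \cup T$ lies in $T$; if $\lambda \in S \cap T$ its token stays put and we recurse on $k-1$ tokens with $\ell$ not increased, while if $\lambda \in T \seq V$ with $\lambda \notin S$ we deliver the nearest token to $\lambda$, freeze it, and charge the at most two extra rounds to the $2(k-1)$ budget.
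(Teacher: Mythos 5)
You have found the right skeleton, but the one step you defer is exactly where all the work lives, and neither of your two routes closes it. Your first route (minimum-total-length bijection, hence no two paths traverse a common edge in opposite directions, then pipelined launches with gaps of $2$) is fine as intuition, but the non-crossing property only excludes opposite traversals of a shared \emph{edge}; it does not prevent two paths from meeting at a shared \emph{vertex} (two tokens attempting to enter it in the same round cannot both be matched), nor does it prevent a launched token from reaching a vertex still occupied by a not-yet-launched token, which under swap semantics displaces that token off its planned path. Any forced wait destroys the arrival bound $2(i-1)+|P_{v_i}|\le \ell+2(k-1)$, and you acknowledge that the behaviour at the turning vertices is unresolved --- so this route is incomplete precisely at its hardest point.

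Your fallback induction is essentially the paper's proof in reverse order, but the assertion ``deliver the nearest token to $\lambda$, freeze it, and charge the at most two extra rounds'' is unjustified as stated: delivering first and then recursing costs $\dist(s,\lambda)+\ell'+2(k-2)$, which can be as large as $2\ell+2(k-2)$, overshooting the budget by about $\ell$. To pay only $+2$ you must \emph{overlap} the delivery with the recursive strategy, and the paper makes this possible by a specific choice your leaf does not replicate: it takes $s^*\in S$ \emph{outside} $\span(T)$ (using time-reversal if no such $s^*$ exists) and $t^*$ the \emph{nearest} vertex of $T$ to $s^*$, so that along the shortest path $P=(s^*,p_1,\dots,p_r,t^*)$ one has $\dist(p_i,T\setminus\{t^*\})\ge r+1-i$. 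Since $j$ rounds before the end of the recursive strategy every agent is within distance $j$ of its destination in $T\setminus\{t^*\}$, the tail $p_1,\dots,p_{r-j}$ is already permanently vacant, and the extra agent can trail the recursion (moving from $p_i$ to $p_{i+1}$ at round $\ell+2(k-2)-r+i+2$), finishing in $\ell+2(k-1)$ rounds with no displacement or matching conflicts. Choosing $\lambda$ to be a leaf of $\span(S\cup T)$ and $s$ the nearest token gives you an initially token-free delivery path, but no staggered distance lower bounds from its interior vertices to the remaining sets, so the overlap --- the heart of the claim --- is missing. (A smaller point: freezing a token at $\lambda$ requires the recursive strategy never to touch $\lambda$; this does hold, but only after you observe that the recursion can be confined to $\span(S'\cup T')$, a subtree that excludes the leaf $\lambda$, and this should be said explicitly.)
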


\begin{proof}
    Let $G=(V,E)$ be a tree, and let $S,T \seq V$ be two subsets of the vertices
    of $G$. The proof is by induction on $k$.
    For the case of $k=1$ the statement is trivial,
    as $\ell$ rounds are enough to route a single agent.

    For the induction step let $k \geq 2$, and assume for simplicity
    that the only agents in the graph are those sitting in $S$,
    and our goal is to route them to $T$.
    Let $\span(S)$ be the minimal subtree of $G$
    containing all vertices $s \in S$, and define $\span(T)$ analogously.

    Note first that if $\span(S)=\span(T)$, then by minimality of the span
    any leaf $s^*$ of $\span(S)$ belongs to both $S$ and $T$.
    Therefore, we may apply the induction hypothesis to route the agents
    from $S \setminus \{s^*\}$ to $T \setminus \{s^*\}$ in the subtree
    $\span(S) \setminus \{s^*\}$, leaving the agent from $s^*$ in place,
    which proves the induction step for the case of $\span(S)=\span(T)$.

    Otherwise, let us assume without loss of generality that there is some $s^* \in S$
    that is not contained in $\span(T)$. If not, then we can consider the problem
    of routing the agents from $T$ to $S$, and note that viewing this strategy
    in the reverse order produces a strategy for routing from $S$ to $T$.

    Let $t^* \in T$ be a vertex such that $\dist(s^*,t^*) = \dist(s^*,T)$,
    and let $P = (s^*=p_0,p_1,\dots,p_r,p_{r+1}=t^*)$ be the shortest path from $s^*$ to $t^*$
    in $G$ (note that $r \leq \ell$ by definition of $\ell$).
    By the induction hypothesis, there is a strategy consisting of $\ell + 2(k-2)$
    rounds that routes the agents from $S \setminus \{s^*\}$ to $T \setminus \{t^*\}$.
    In such a strategy after the last step all agents are in $T \setminus \{t^*\}$
    and thus the vertices $\{p_1,\dots,p_r\}$ contain no agents
    (since $p_i \notin T\setminus\{t^*\}$ for all $i \in [r]$).
    After round number $(\ell + 2(k-2) - 1)$, i.e., one step before the last,
    the vertices $\{p_1,\dots,p_{r-1}\}$ contain no agents, because $\dist(p_i,T) \geq 2$
    for all $i \leq r-1$. Analogously, for all $j \leq r$ the vertices $\{p_1,\dots,p_{r-j}\}$
    contain no agent after round number $(\ell + 2(k-2) - j)$.
    Therefore, we can augment the strategy by moving the agent from $s^*$
    to $t^*$ along the path $P$. Specifically, for all $i=0,\dots,r$ we
    move the agent from $p_i$ to $p_{i+1}$ in round $\ell+2(k-2)-r+i+2$,
    which adds two rounds to the strategy. The claim follows.
\end{proof}

\section{Some Concrete Examples}\label{sec:examples}

We start with an easy example, showing that for the graph $P_n$,
a path of length $n$, the acquaintance time is $\Theta(n)$.

\begin{proposition}{\em($\AC$ of a path):}\label{prop:path}
    Let $P_n$ be a path of length $n$. Then $\AC(P_n) = \Theta(n)$.
\end{proposition}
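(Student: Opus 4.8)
The plan is to prove the two bounds separately: the lower bound $\AC(P_n) = \Omega(n)$ is immediate, while the upper bound $\AC(P_n) = O(n)$ calls for an explicit strategy. For the lower bound I would simply invoke the diameter bound noted in the introduction, $\AC(G) \geq \floor{\diam(G)/2}$. Since $\diam(P_n) = n-1$, this already gives $\AC(P_n) \geq \floor{(n-1)/2} = \Omega(n)$, which is one half of the claim.

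For the upper bound the plan is to exhibit a single universal strategy, the odd-even transposition (``brick wall'') pattern, and to show it acquaints all pairs within $O(n)$ rounds. Label the vertices $1,\dots,n$ along the path and label the agents $a_1,\dots,a_n$ by their starting positions. In odd rounds I apply the matching $\{(1,2),(3,4),\dots\}$ and in even rounds the matching $\{(2,3),(4,5),\dots\}$, always performing the swap on every edge selected. Each of these is a legal matching of $P_n$, so this is a valid strategy; it remains to bound how long it takes for every pair to become adjacent.

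The key step is the following meeting argument, which I would base on the classical behaviour of odd-even transposition sort. Run unconditionally, the pattern reverses the order of the agents, transforming $(a_1,\dots,a_n)$ into $(a_n,\dots,a_1)$ within $n$ rounds. Granting this, fix any pair $a_i,a_j$ with $i<j$: initially $a_i$ lies to the left of $a_j$, whereas at the end it lies to the right, so their relative left-right order flips. But on a path the relative order of two fixed agents can change only when those two agents pass each other, which requires that they occupy a common edge and swap — that is, that they are adjacent at that round. Since their order flips an odd number of times in total, at least one such adjacency occurs, so every pair $a_i,a_j$ is acquainted at the round in which it is swapped past. Hence all pairs are acquainted within $n$ rounds, giving $\AC(P_n) \leq n = O(n)$ and completing the $\Theta(n)$ bound.

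I expect the main obstacle to be the quantitative claim that the brick-wall pattern achieves the full reversal within $O(n)$ rounds; the meeting argument itself (reversal forces every pair to have been adjacent) is clean once the reversal is in hand. If one prefers a self-contained count rather than citing odd-even transposition sort, I would instead track each agent as a ``billiard ball'' that moves one vertex per round and reflects at the two endpoints, and verify by a short boundary case analysis that after $O(n)$ rounds the configuration is exactly the reversal. Either route makes the round bound explicit, and both feed into the same relative-order argument above.
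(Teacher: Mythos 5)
Your proof is correct, but it takes a genuinely different route from the paper. The paper proves the upper bound by divide and conquer: it first routes the agents $p_1,\dots,p_{\floor{n/2}}$ to the right half and the agents $p_{\floor{n/2}+1},\dots,p_n$ to the left half in $O(n)$ rounds via Claim~\ref{claim:routing subgraphs} (observing, just as you do, that two agents on a path exchange sides only by swapping across a common edge, so all cross pairs meet), and then recurses on the two halves simultaneously, yielding $T(n) = O(n) + T(\ceil{n/2}) = O(n)$. You instead exhibit a single oblivious strategy, the odd--even transposition pattern, and derive the meeting property from a parity argument: the pattern realizes the full reversal in $n$ rounds, reversal flips the relative order of every pair, and on a path an order flip can only happen via a swap on a shared edge, at which moment the pair is acquainted. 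Both arguments are sound; the underlying crossing observation is the same. What your route buys: an explicit, non-adaptive strategy with the clean bound $\AC(P_n) \leq n$ (within roughly a factor $2$ of the lower bound $\floor{(n-1)/2}$, which both you and the paper get from the diameter), and no dependence on the routing machinery. What the paper's route buys: it exercises Claim~\ref{claim:routing subgraphs}, which is reused throughout (expanders, trees, the long-path bound), and its block structure is exactly what is lifted to the clique-of-cliques construction in Proposition~\ref{prop:clique+matching}, where each ``vertex'' of the path becomes a clique of size $\ell$; your brick-wall strategy would also lift to that setting, but the paper's phrasing makes the lift immediate. The one item you correctly flag as needing verification --- that the unconditional brick-wall pattern produces the exact reversal in $n$ rounds --- is a standard fact about odd--even transposition networks, and your billiard-ball sketch (each agent moves one step per round, pausing one round to reflect at an endpoint) closes it; note also that for your argument even exact reversal is more than needed, since any $O(n)$-round schedule after which every pair's order has flipped an odd number of times suffices.
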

\begin{proof}
    Clearly $\AC(P_n) \geq \floor{\diam(P_n)/2} = \floor{(n-1)/2}$.
    For the upper bound denote the vertices of $P_n$ by $v_1,\dots,v_n$,
    where $v_i$ is connected to $v_{i+1}$ for all $i \in [n-1]$,
    and denote by $p_i$ the agent sitting initially in the vertex $v_i$.
    Consider the following strategy that works in $O(n)$ rounds:
    \begin{enumerate}
    \item
        Apply Claim~\ref{claim:routing subgraphs} in order
        to route all agents $p_1,\dots,p_{\floor{n/2}}$ to the vertices $v_{\ceil{n/2}+1},\dots,v_n$,
        and route $p_{\floor{n/2}+1},\dots,p_n$ to the vertices $v_1,\dots,v_{\ceil{n/2}}$.
        This can be done in $O(n)$ rounds.
        Note that after this sequence every pair of agents $(p_i,p_j)$
        with $1 \leq i \leq \floor{n/2} < j \leq n$ have already met each other.
    \item
        Repeat the above procedure recursively on each of the two halves
        $(v_1,\dots,v_{\ceil{n/2}})$ and $(v_{\ceil{n/2}+1},\dots,v_n)$
        simultaneously.
    \end{enumerate}
    To bound the total time $T(n)$ of the procedure,
    we make $O(n)$ rounds in the first part, and at most $T(\ceil{n/2})$
    in the remaining parts. This gives us a bound of
    \[
        T(n) = O(n) + T(\ceil{n/2}) = O(n),
    \]
    as required.
\end{proof}

The following corollary is immediate from Proposition~\ref{prop:path}.
\begin{corollary}\label{cor:hamiltonian}
    Let $G$ be Hamiltonian graph with $n$ vertices.
    Then $\AC(G) = O(n)$.
\end{corollary}

We next prove that for constant degree expanders
the acquaintance time is also linear in the size of the graph.
For $\alpha > 0$ a $d$-regular graph $G=(V,E)$ with $n$ vertices is said to be an $(n,d,\alpha)$-expander
if for every subset $S \seq V$ of size $|S| \leq |V|/2$ it holds that $|N(S) \setminus S| \geq \alpha \cdot |S|$.
\begin{proposition}{\em($\AC$ of expander graphs):}\label{prop:expander}
    Let $G=(V,E)$ be an $(n,d,\alpha)$-expander graph for some $\alpha > 0$.
    Then $\AC(G) = \Theta(n)$, where the multiplicative constant
    in the $\Theta()$ notation depends only on $\alpha$ and $d$ but not on $n$.
\end{proposition}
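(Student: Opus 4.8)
The plan is to prove the two bounds separately. The lower bound $\AC(G) = \Omega(n)$ is immediate from the edge-counting bound in the introduction: since $G$ is $d$-regular we have $|E| = dn/2$, so $\AC(G) \ge \binom{n}{2}/|E| - 1 \ge \frac{n-1}{d} - 1 = \Omega(n)$, where the hidden constant depends only on $d$. Thus all the work is in the upper bound $\AC(G) = O(n)$, and here I would imitate the divide-and-conquer used for the path (Proposition~\ref{prop:path}), replacing the one-dimensional ``crossing'' argument by the routing and connectivity properties of the expander.

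Concretely, I would split $V$ into two halves $A \sqcup B$ with $|A| = |B| = n/2$ and aim for a recursion of the form $T(n) = O(n) + T(n/2) = O(n)$, where the $O(n)$ term is a cross-phase that acquaints every pair $(a,b)$ with $a \in A$ and $b \in B$, after which the recursion is applied to the two halves in parallel (on disjoint vertex sets, hence with disjoint matchings). The cross-phase must meet all $|A|\cdot|B| = \Theta(n^2)$ cross pairs, so by the same edge count it needs $\Omega(n)$ rounds and must sustain $\Theta(n)$ new meetings per round. The point where expansion enters is that, for a balanced cut, $d$-regularity together with the expansion of $G$ forces $e(A,B) \ge |N(A)\setminus A| \ge \alpha n/2 = \Theta(n)$ crossing edges, so in any sufficiently ``generic'' arrangement of the agents $\Theta(n)$ cross pairs are simultaneously adjacent and can be acquainted in a single round. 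The cross-phase would then alternate between (i) acquainting all currently adjacent cross pairs, and (ii) applying a permutation of the agents inside $A$ and inside $B$ that exposes $\Theta(n)$ previously unmet cross pairs along the cut, so that after $O(n)$ such sub-phases every cross pair has been adjacent at some time.

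For the relocations in step (ii) I would exploit that an expander has diameter $O(\log n)$, so a BFS spanning tree $T$ has depth $O(\log n)$ and any two vertices lie within distance $O(\log n)$ in $T$; feeding this into Claim~\ref{claim:routing subgraphs} (whose moves use only tree edges, which are edges of $G$) lets me route any $k$ designated agents to any $k$ designated target vertices in $\ell + 2(k-1) = O(\log n) + O(n) = O(n)$ rounds. This is what makes each rearrangement of the agents cheap enough to fit within the $O(n)$ budget of a single cross-phase.

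The main obstacle I anticipate is twofold. First, the scheduling inside the cross-phase: I must design the sequence of internal permutations in (ii) so that the $\Theta(n)$ cross pairs exposed in each sub-phase are genuinely new, guaranteeing that all $\Theta(n^2)$ cross pairs are covered in $O(n)$ sub-phases without wasteful repetition. This is where either a careful ``round-robin''-style schedule along the cut, or a probabilistic argument showing that random internal shifts expose all cross pairs within $O(n)$ rounds with high probability, seems to be needed. Second, the recursion is delicate because an induced subgraph of an expander need not be an expander (or even connected), so I cannot simply recurse on $A$ and $B$ as smaller expanders; instead I would keep the agents of each half confined and route them using the low-depth spanning tree restricted to that half, or re-partition at each level so that acquainting within-half pairs still obeys the $T(n/2)$ bound. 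Making this last point rigorous — so that the total cost over all $O(\log n)$ levels remains $O(n)$ rather than $O(n\log n)$ — is the crux of the argument.
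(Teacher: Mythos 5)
Your lower bound is fine and matches the paper's. The upper bound, however, has a genuine gap, and it is exactly at the two points you flag yourself. First, the round-accounting of your cross-phase does not close: acquaintance across the cut happens for free by adjacency, so all the rounds are spent on the rearrangements in step (ii), and the only routing tool available, Claim~\ref{claim:routing subgraphs}, costs $\ell + 2(k-1) = \Theta(n)$ rounds to relocate $k = \Theta(n)$ agents. Since each arrangement exposes at most $e(A,B) \leq dn/2$ cross pairs, you need $\Theta(n)$ rearrangements to cover all $\Theta(n^2)$ cross pairs, for a total of $\Theta(n) \cdot \Theta(n) = \Theta(n^2)$ rounds --- no better than the trivial bound. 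To fit the cross-phase into $O(n)$ rounds you need rearrangements at \emph{amortized $O(1)$ rounds each}, i.e., a cyclic-shift structure where a single round of swaps exposes a fresh batch of cross pairs; such a structure is precisely what a long path or cycle provides, and nothing in your sketch supplies it. Second, the recursion $T(n) = O(n) + T(n/2)$ is unsound as stated, because the induced halves need not be connected (let alone expanders), and your proposed fixes (restricting a BFS tree to a half, or re-partitioning) do not repair this: a spanning tree restricted to a vertex subset is generally disconnected, and you give no mechanism by which within-half pairs can be acquainted at cost $T(n/2)$ using only the global graph.

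The paper sidesteps both problems with one key input you are missing: the theorem of Bj\"{o}rklund, Husfeldt and Khanna (Theorem~\ref{thm:expander long paths}) that an $(n,d,\alpha)$-expander contains a \emph{simple path of linear length} $\ell = \Omega(\frac{\alpha}{d}n)$. Given this, there is no recursion on subgraphs at all: partition the agents into $c = \lceil 2n/\ell \rceil = O(1)$ classes of size at most $\ell/2$, and for each of the $\binom{c}{2} = O(1)$ pairs of classes, route the $\leq \ell$ agents of $C_i \cup C_j$ onto the path via Claim~\ref{claim:routing subgraphs} (in $O(n + \ell)$ rounds, since distances are at most $n$) and then run the path strategy of Proposition~\ref{prop:path} on it (in $O(\ell)$ rounds) so that all pairs within $C_i \cup C_j$ meet. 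The total is $\binom{c}{2} \cdot O(n + \ell) = O(n)$. Note that the path strategy is exactly the amortized-$O(1)$ shifting mechanism your cross-phase lacks; without importing a linear-length path (or an equivalently strong structural fact), your scheme cannot be completed within the $O(n)$ budget.
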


\begin{proof}
    Recall that $\AC(G) = \Omega(\frac{n^2}{|E|}) = \Omega(n)$,
    since the expander is of constant degree.
    For the upper bound we shall need the following theorem due to
    Bj\"{o}rklund, Husfeldt and Khanna~\cite{BHK}
    saying that every expander graph contains a simple path of linear length.
    \begin{theorem}[{\cite[Theorem 4]{BHK}}]\label{thm:expander long paths}
        Let $G$ be an $(n,d,\alpha)$-expander graph.
        Then, $G$ contains a simple path of length $\Omega(\frac{\alpha}{d} \cdot n)$.
    \end{theorem}

    Let $P$ be a simple path of even length $\ell$ in $G$, where $\ell = \Omega(n)$.
    Such a path exists by Theorem~\ref{thm:expander long paths}.
    Partition all agents into $c = \ceil{2n/\ell}$
    disjoint subsets $C_1 \cup \dots  \cup C_c$ each of size at most $\ell/2$.
    Then, for every pair $i,j \in [c]$ we use the strategy from Claim~\ref{claim:routing subgraphs}
    to place the agents from the two subsets $C_i \cup C_j$ on $P$,
    and then apply the strategy from Proposition~\ref{prop:path}
    so that every pair of agents from $C_i \cup C_j$ meet.
    By repeating this strategy for every $i,j \in [c]$, we make sure
    that every pair of agents on $G$ meet each other.
    In order to analyze the total length of the strategy,
    we note that for a single pair $i,j \in [c]$
    the total time is at most $n + O(\ell)$,
    and hence the total length of the strategy
    is at most
    \[
        \AC(G) \leq {c \choose 2} \cdot O( n + \ell ),
    \]
    which is linear in $n$, since $\ell = \Omega(n)$, and $c = O(n/\ell) = O(1)$.
\end{proof}

Next we upper bound the acquaintance time of the binary tree graph.
\begin{proposition}{\em($\AC$ of binary tree):}\label{prop:binary tree}
    Let $T$ be the binary tree with $n$ vertices.
    Then $\AC(T) = O(n \log(n))$.
\end{proposition}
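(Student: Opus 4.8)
The plan is to root the tree at a vertex $r$, writing $T = \{r\} \cup T_L \cup T_R$ where $T_L, T_R$ are the two subtrees hanging off $r$, each with roughly $n/2$ vertices and depth one smaller, and to proceed by divide and conquer. Let $A_L$ (resp. $A_R$) denote the agents sitting initially in $T_L$ (resp. $T_R$). The strategy has two phases: a \emph{crossing phase}, after which every agent of $A_L$ has met every agent of $A_R$; and a \emph{recursion phase}, in which we apply the same strategy to $T_L$ and to $T_R$ in order to acquaint all intra-subtree pairs. Together with the pairs already acquainted through $r$, this covers all pairs. If the crossing phase costs $O(n)$ rounds, then the running time satisfies $T(n) \le O(n) + 2\,T(n/2)$, which solves to $T(n) = O(n\log(n))$, giving the claimed bound on $\AC(T)$.

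For the crossing phase the obstacle is that, unlike the path in Proposition~\ref{prop:path}, where swapping the two halves forces every left--right pair to cross, the binary tree has no long paths, so one cannot simply route $A_L$ onto a single line and slide it past $A_R$. Instead I would \emph{sweep} the agents of one side through the other. Fix a depth-first (Euler-tour) ordering of the vertices of $T_L$; this is a closed walk of length $2|T_L|$ that visits every vertex of $T_L$. Using Claim~\ref{claim:routing subgraphs} I would feed the agents of $A_R$ through $r$ into this walk in a pipelined wave, advancing the wave one step per round. After $O(n)$ rounds every agent of $A_R$ has traversed the entire walk and has therefore been adjacent to every vertex of $T_L$, so it has met every agent of $A_L$. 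A symmetric sequence (or, by the reversibility of strategies noted in the introduction, the reverse of the routing used to bring the agents in) then restores the agents to their subtrees in a further $O(n)$ rounds.

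The main difficulty I expect is making this pipelined sweep simultaneously valid and efficient. The Euler walk revisits vertices; the ``stationary'' agents of $A_L$ are themselves displaced as the wave of $A_R$ agents passes through; and one must check both that each $A_R$ agent still becomes adjacent to each $A_L$ agent and that the moves performed in each round form a legal matching. I would control this by invoking Claim~\ref{claim:routing subgraphs} to schedule the entry of the $A_R$ agents at the root, and by treating the Euler walk as a path along which the wave propagates, so that the crossing mechanism of Proposition~\ref{prop:path} applies locally, while keeping careful track of where the $A_L$ agents are pushed and when they return. Finally I would verify the base case (a constant-size subtree is acquainted in $O(1)$ rounds) and confirm that the crossing phase together with the two recursive calls on the disjoint subtrees indeed acquaints every pair, completing the induction and the recursion $T(n) \le O(n) + 2\,T(n/2) = O(n\log(n))$.
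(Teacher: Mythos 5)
Your top-level skeleton (a crossing phase followed by recursion on the two subtrees, with $T(n) \le O(n) + 2T(n/2)$) matches the paper's, but your crossing phase contains the real gap, and it is exactly the difficulty you flag without resolving. First, a pipelined wave that ``advances one step per round'' along a walk is not a legal strategy in this model: shifting every agent of the wave forward one position is a cyclic rotation of occupancies, which cannot be realized by a matching of pairwise swaps (that is the vertex-disjoint-cycles variant mentioned in Section~\ref{sec:other variants}, not the model here). Second, the Euler tour revisits vertices: a vertex whose child subtree has $m$ vertices appears at tour positions $2m$ apart, so two wave agents at tour separation $2m$ would have to occupy the same vertex simultaneously; since subtree sizes range over many values up to $\approx n/2$, no spacing of the $\approx n/2$ agents of $A_R$ inside a tour of length $\approx 2|T_L|$ avoids collisions, and thinning the wave to $k$ agents per sweep costs $O(n) \cdot \lceil n/(2k)\rceil = O(n^2/k)$ rounds, which is $O(n)$ only for the dense pipeline you cannot schedule. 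Third, the inference ``each $A_R$ agent is at some point adjacent to every vertex of $T_L$, hence meets every agent of $A_L$'' is invalid precisely because the $A_L$ agents are displaced by the wave: adjacency to a vertex at some round says nothing about meeting that vertex's original occupant, who may consistently be elsewhere when you arrive. The single-walker DFS argument (used for the trivial $2n^2$ bound) works because each displacement is undone between the two traversals of an edge; with many concurrent walkers that invariant is destroyed and you offer no replacement. There is also strong evidence the $O(n)$ crossing claim cannot be rescued by bookkeeping alone: the two recursive calls act on vertex-disjoint subtrees and so can run simultaneously, whence an $O(n)$ crossing phase would give $T(n) = O(n) + T(n/2) = O(n)$, i.e.\ $\AC(T) = O(n)$ --- a statement this paper explicitly leaves open and which was only later proved by Angel and Shinkar~\cite{AS} by different means.

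The paper sidesteps all of this by making the crossing phase itself recursive rather than a sweep: after letting the agents at the two subtree roots meet everyone in $O(n)$ rounds, it swaps the populations of two of the four depth-two subtrees via Claim~\ref{claim:routing subgraphs} in $O(n)$ rounds, recurses on the depth to acquaint the appropriate quarter-populations in parallel, reshuffles the four populations, and recurses once more; the crossing phase thus costs $C(n) = O(n) + 2C(n/2) = O(n\log(n))$. This weaker crossing bound still yields $\AC(T) = O(n\log(n))$ overall, because the top-level recursive calls on the two subtrees run simultaneously and the per-level costs telescope as $O(n\log(n)) + O((n/2)\log(n/2)) + \dots = O(n\log(n))$. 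If you wish to salvage your approach within the stated bound, the cleanest repair is to replace the Euler-tour sweep by such a recursive exchange of subtree populations, for which Claim~\ref{claim:routing subgraphs} supplies every routing step you need.
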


Note that $\AC(T) = \Omega(n)$ since the number of edges
in $T$ is $n-1$. A recent result of~\cite{AS} gives an
asymptotically tight bound of $\AC(T) = O(n)$,
where $O()$ hides some absolute constant.

\medskip

\begin{proof}
    Associate the vertices of $T$ with $\{0,1\}^{\leq \log(n)}$
    the binary strings of length at most $\log(n)$ in the following natural way.
    The vertex $v_\eps$ (corresponding to the empty string) is the root,
    the vertices $v_0$ and $v_1$ are the children of $v_\eps$.
    In general, the children of $v_{s}$ are $v_{s0}$ and $v_{s1}$.

    For $s \in \{0,1\}^{\leq \log(n)}$ denote by $T_s$ the subtree
    rooted at $v_s$.
    We also denote by $p_s$ the agent originally located in $v_s$,
    and let $P_s$ be the set of agents who were originally in $T_s$.

    We claim that it is enough to find a strategy of length $O(n \log(n))$
    that allows every agent in $P_0$ meet every agent from $P_1$.
    Indeed, suppose we have such strategy.
    We describe a strategy for acquaintance in $T$.
    \begin{enumerate}
    \item
        Let the agent $p_\eps$ sitting in $v_\eps$ meet all other agents by performing
        a DFS walk on the tree, and return everyone to their original locations by applying
        the same strategy in the reverse order.
        This step can be done in $O(n)$ rounds.
    \item
        Apply a strategy of length $O(n \log(n))$ that makes all agents in
        $P_0$ meet all agents in $P_1$.
    \item
        Return the agents of $P_0$ to $T_0$ and return the agents of $P_1$ to $T_1$,
        by reversing the order of steps taken in the previous item.
    \item
        Apply steps 1-3 recursively on the subtree $T_0$
        and on the subtree $T_1$ simultaneously.
        That is, every agent from $P_{00}$ meets every agent from $P_{01}$,
        and every agent from $P_{10}$ meets every agent from $P_{11}$,
        and so on...
    \end{enumerate}
    Analyzing the total number of rounds, we have $O(n \log(n))$ rounds in the
    first 3 steps. Therefore, the total number of rounds is upper bounded by
    $O(n \log(n)) + O(n/2 \log(n/2)) + O(n/4 \log(n/4)) \dots = O(n \log(n))$,
    as required.

\medskip
    Next, we describe a strategy that makes every agent from $T_0$
    meet every agent from $T_1$ in $O(n \log(n))$ rounds.
    \begin{enumerate}
    \item
        Let the agents $p_0$ and $p_1$ meet all other agents,
        and ignore them from now on.
        This step can be done in $O(n)$ rounds.
    \item
        Route the agents in $P_{00}$ to the subtree $T_{10}$, and route the
        agents in $P_{10}$ to $T_{00}$. This can be done in $O(n)$ rounds
        by considering the subtree of $T$ induced by the vertices
        $T_{00} \cup T_{10} \cup \{v_\eps,v_0,v_1\}$
        and applying Claim~\ref{claim:routing subgraphs}.
    \item
        Apply induction on the depth of the tree to make all agents in $P_{00}$
        (who are located in $T_{10}$) meet all agents in $P_{11}$
        (located in $T_{11}$), and simultaneously make all agents in
        $P_{01}$ (who are located in $T_{00}$) meet all agents in $P_{01}$
        (located in $T_{01}$).
    \item
        Route the agents in $P_{01}$ to the subtree $T_{10}$,
        route the agents in $P_{11}$ to $T_{11}$,
        route the agents in $P_{00}$ to $T_{00}$,
        and route the agents in $P_{10}$ to $T_{01}$.
        This can be done in $O(n)$ rounds by applying
        Claim~\ref{claim:routing subgraphs} on the appropriate subgraphs.
    \item
        Apply induction on the depth of the tree to make all agents in $P_{01}$
        (who are located in $T_{10}$) meet all agents in $P_{11}$
        (located in $T_{11}$), and simultaneously make all agents in
        $P_{00}$ (who are located in $T_{00}$) meet all agents in $P_{10}$
        (located in $T_{01}$).
    \end{enumerate}
    It is clear that in steps 1,3, and 5 all agents from $P_0$ meet all agents
    from $P_1$. For the analysis of the number of rounds let us denote the
    total number of rounds by $T(n)$. Then, steps 1,2, and 4 contribute $O(n)$ rounds
    to $T(n)$, and steps 3 and 5 contribute additional $2T(n/2)$ rounds.
    Therefore, $T(n) = O(n) + 2 T(n/2) = O(n \log(n))$.
\end{proof}

\subsection{Separating $\AC(G)$ From Other Parameters}\label{sec:separating AC(G) from other params}

In this section we provide several additional examples.
These examples separate $\AC(G)$ from other parameters of graphs.
Our first example shows a graph with low diameter, low clique cover number
(that is, $\overline{G}$ has low chromatic number), such that $\AC(G)$ is large.
\begin{proposition}{\em($\AC$ of the barbell graph):}\label{prop:two cliques}
    Let $G$ be the barbell graph. That is, $G$ consists of
    two cliques of size $n$ connected by a single edge, called bridge.
    Then $\AC(G) = \Theta(n)$.
\end{proposition}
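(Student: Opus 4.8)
The plan is to prove matching bounds $\AC(G) = \Omega(n)$ and $\AC(G) = O(n)$. Throughout, write $A$ and $B$ for the two cliques, let $a^* \in A$ and $b^* \in B$ be the endpoints of the bridge, and call an agent \emph{$A$-origin} or \emph{$B$-origin} according to which clique it starts in. The crucial structural observation, used in both directions, is that within a clique every two agents occupy adjacent vertices, so the only way a pair of agents can \emph{newly} become acquainted is if at least one of them traverses the bridge (or the two of them momentarily occupy $a^*$ and $b^*$); rearranging agents inside a single clique never creates a new acquaintance.

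For the lower bound I would count \emph{cross-acquaintances}, i.e. acquaintances between an $A$-origin and a $B$-origin agent; there are exactly $n^2$ of these to be formed. Since the bridge is a single edge, in any round at most one agent crosses from $A$ to $B$ and at most one crosses from $B$ to $A$. By the observation above, every cross-acquaintance formed in a given round involves one of these (at most two) crossing agents together with the residents of the clique it enters, plus possibly the single pair sitting on the bridge. Hence at most $2n + O(1)$ new cross-acquaintances can be formed per round, and therefore $\AC(G) \geq n^2/(2n + O(1)) = \Omega(n)$. Note that the generic bounds are useless here: the diameter is $3$ and $|E| = \Theta(n^2)$, so $\floor{\diam(G)/2}$ and $\binom{2n}{2}/|E| - 1$ both give only $\Omega(1)$.

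For the upper bound I would exhibit a single ``sweep'' across the bridge. Label the $A$-origin agents $a_1,\dots,a_n$ and the $B$-origin agents $b_1,\dots,b_n$ arbitrarily. In step $i$ (for $i = 1,\dots,n$) I bring $a_i$ to $a^*$ and $b_i$ to $b^*$ --- each takes $O(1)$ rounds because each clique has diameter $1$, so any agent can be swapped to an endpoint in one matching --- and then use the bridge to swap them, so that $a_i$ enters $B$ and $b_i$ enters $A$. The point is that at the moment of the $i$-th swap the $A$-origin agents still residing in $A$ are exactly $a_{i+1},\dots,a_n$ and the $B$-origin agents still residing in $B$ are exactly $b_{i+1},\dots,b_n$. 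Consequently $b_i$, on entering $A$, meets all of $a_{i+1},\dots,a_n$, and $a_i$, on entering $B$, meets all of $b_{i+1},\dots,b_n$, while $a_i$ and $b_i$ meet each other on the bridge during the swap itself. A case analysis on whether $i<j$, $i=j$, or $i>j$ then shows that every cross pair $(a_i,b_j)$ is acquainted by the end of the sweep, and since each of the $n$ steps costs $O(1)$ rounds the whole strategy runs in $O(n)$ rounds.

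The main thing to get right is the bookkeeping in the upper bound: one must verify that the intra-clique repositioning needed to feed the right agents to the endpoints (and to park the newly arrived agents out of the way) really costs only a constant number of matchings per step and does not disturb agents that have already crossed. This is routine given that the cliques have diameter $1$, but it is where the argument must be stated carefully. On the lower-bound side, the one delicate point is justifying that intra-clique moves create no new acquaintances, so that the count of $n^2$ required cross-acquaintances is genuinely throttled by the single-edge bridge.
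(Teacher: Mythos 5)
Your proof is correct, but it routes both bounds differently from the paper, which disposes of the proposition in two lines: the upper bound is cited from Corollary~\ref{cor:hamiltonian} (the barbell has a Hamiltonian path, so the path strategy of Proposition~\ref{prop:path} applies), and the lower bound is an occupancy argument --- for any strategy, either every $A$-origin agent visits $a^*$ or every $B$-origin agent visits $b^*$ (if some $A$-agent never reaches $a^*$ it never leaves $A$, forcing every $B$-agent to cross into $A$ and hence pass through $b^*$), and since at most one new agent can occupy a given vertex per round this forces $\Omega(n)$ rounds. Your replacement for the lower bound --- counting new cross-acquaintances, of which $n^2-O(1)$ must be formed while the single bridge edge throttles the rate to at most $2n+O(1)$ per round --- is sound; the one delicate point you flag (intra-clique moves create no new acquaintances, so every new cross-pair involves a crossing agent or the bridge pair) is exactly right and is what makes the count valid. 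It is worth noting that your counting argument is precisely the technique the paper itself develops later: it is the special case $S=\{a^*,b^*\}$ of the bottleneck bound in Proposition~\ref{prop:bottlenecks} (and the same bookkeeping as in Proposition~\ref{prop:octopus}), so while the paper's occupancy argument is shorter here, your version generalizes to arbitrary separators. Your explicit sweep for the upper bound ($2$ rounds per step: one intra-clique matching positioning $a_i$ at $a^*$ and $b_i$ at $b^*$ simultaneously, then the bridge swap, with the three-way case analysis on $(a_i,b_j)$) is likewise correct and self-contained, at the cost of the parking bookkeeping you mention --- which is indeed routine, since displacing the previously arrived agent from $a^*$ is a single clique edge in the positioning matching --- whereas the paper's citation buys brevity at the cost of invoking the recursive path machinery.
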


\begin{proof}
    The upper bound follows from Hamiltonicity of $G$ (see Corollary~\ref{cor:hamiltonian}).
    For the lower bound, denote the vertices of the two cliques by $A$ and $B$,
    and denote the bridge by $(a_0,b_0)$, where $a_0 \in A$ and $b_0 \in B$.
    Then, in any strategy for acquaintance either all agents from $A$
    visited $a_0$, or all agents from $B$ visited $b_0$,
    and the proposition follows.
\end{proof}

A more interesting example shows the existence of a Ramsey graph $G$ with $\AC(G) = 1$,
where by Ramsey graphs we refer to graphs that contains neither a clique
nor an independent set of logarithmic size.
For more details regarding graphs with $\AC(G) = 1$ see Section~\ref{sec:AC(G)=1}.
\begin{proposition}{\em(Ramsey graph with $\AC(G)=1$):}\label{prop:ramsey AC=1}
    There is a graph $G$ on $n$ vertices that contains neither a clique
    nor an independent set of size $\Omega(\log(n))$ such that $\AC(G) = 1$.
\end{proposition}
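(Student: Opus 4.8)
We need a graph $G$ on $n$ vertices that is simultaneously:
1. A **Ramsey graph**: no clique or independent set of size $O(\log n)$
2. Has **$\AC(G) = 1$**: a single matching makes all pairs acquainted

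Let me think about what $\AC(G) = 1$ means carefully.\textbf{Understanding the target.} We need a single graph $G$ on $n$ vertices that is simultaneously Ramsey (neither a clique nor an independent set of size $\Omega(\log n)$, which is the best one can hope for since every graph has a clique or independent set of size $\Omega(\log n)$) and satisfies $\AC(G) = 1$. The first property is the classical probabilistic fact that a random graph $G(n, 1/2)$ is Ramsey with high probability. So the natural plan is to take a random graph and argue that, after possibly a small modification, it also achieves acquaintance time $1$.

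\textbf{Unpacking $\AC(G) = 1$.} First I would pin down exactly what $\AC(G) = 1$ requires. A strategy of length $1$ consists of a single matching $M$. Every pair acquainted after one round means: every pair of agents $\{p_u, p_v\}$ either shares an edge in the original placement, or shares an edge after the swap along $M$. Equivalently, writing $\pi$ for the involution induced by $M$ (so $\pi$ swaps the endpoints of each matched edge and fixes unmatched vertices), the condition is that for every pair of vertices $u, v$, either $(u,v) \in E$, or $(\pi(u), \pi(v)) \in E$. In other words, $E \cup \pi(E)$ must be the complete graph $K_n$, where $\pi$ is an involution that is a graph automorphism-respecting swap along matching edges. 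So the plan reduces to: find a graph $G$ and a perfect matching / involution $\pi$ such that $E \cup \pi(E) = \binom{V}{2}$, while $G$ stays Ramsey.

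\textbf{The construction.} The cleanest approach is to build $G$ on a vertex set that is already paired up as $\{x_1, y_1\}, \dots, \{x_{n/2}, y_{n/2}\}$, and let $M = \{(x_i,y_i)\}$ be the matching we will use. I would then choose the non-matching edges at random: for each unordered pair of distinct indices $i \ne j$, independently decide the adjacencies among the four vertices $x_i, y_i, x_j, y_j$ so that the ``$E \cup \pi(E) = K_n$'' constraint is forced to hold. Concretely, for each of the pairs $\{x_i,x_j\}, \{x_i,y_j\}, \{y_i,x_j\}, \{y_i,y_j\}$, note that $\pi$ maps $\{x_i,x_j\} \leftrightarrow \{y_i,y_j\}$ and $\{x_i,y_j\} \leftrightarrow \{y_i,x_j\}$. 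For each such swapped pair of edge-slots, I require that at least one of the two be present; to get $\AC(G)=1$ I can simply flip an independent fair coin for each of these two $2$-element orbits of edge-slots, and whenever a coin would leave both slots empty, I instead place exactly one of them (chosen uniformly). This guarantees the covering condition $E \cup \pi(E) = K_n$ deterministically, while keeping enough randomness in the adjacencies that the graph behaves like a random graph for the Ramsey property. The matching edges $(x_i,y_i)$ I include in $E$ outright.

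\textbf{Proving the Ramsey property and the main obstacle.} With this construction I would then run the standard first-moment / union-bound argument: the expected number of cliques or independent sets of size $k = C\log n$ is at most $\binom{n}{k} \cdot (\text{probability a fixed } k\text{-set is homogeneous})$, and for $k$ a large enough constant times $\log n$ this is $o(1)$. The main obstacle — and the step requiring genuine care rather than routine calculation — is that the edges are \emph{not} fully independent: within each $2$-orbit of slots the two indicators are negatively correlated (they cannot both be absent), and the conditional-flip rule introduces mild dependencies. So I cannot quote the textbook $G(n,1/2)$ Ramsey bound as a black box. I would handle this by showing that for any fixed $k$-subset $S$ of vertices, the probability that $S$ is a clique (or independent set) is still at most $2^{-\Omega(k^2)}$: the key point is that among the $\binom{k}{2}$ pairs inside $S$, a constant fraction lie in distinct orbits and have genuinely independent (or negatively-correlated-in-the-favorable-direction) indicators, so the homogeneity probability decays like a product of $\Omega(k^2)$ independent-ish factors bounded away from $1$. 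Negative correlation only \emph{helps} the independent-set count (it makes all-absent less likely), so the independent-set bound is if anything easier; the clique bound is the symmetric statement under complementation of the free coins. Combining, the union bound over all $\binom{n}{k}$ sets gives $o(1)$ for $k = \Theta(\log n)$, so with positive probability the random $G$ is Ramsey, and by construction \emph{every} such $G$ already has $\AC(G)=1$. Hence a graph with both properties exists, proving the proposition.
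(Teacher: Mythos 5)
Your proof is correct, but it takes a genuinely different route from the paper. Both proofs rest on the same characterization you derive at the start: with $\pi$ the involution induced by the matching $M$, the condition $\AC(G)=1$ (via $M$) is exactly $E \cup \pi(E) = \binom{V}{2}$, and the non-matching pairs split into $\pi$-orbits of size two, each of which must contain at least one edge. The paper, however, \emph{derandomizes} the Ramsey part: it starts from an arbitrary Ramsey graph $H$ on $n/2$ vertices, puts $H$ on one side of the matching and $\overline{H}$ on the other (so the orbit $\{(u_i,u_j),(u'_i,u'_j)\}$ is automatically covered), and covers each cross-orbit $\{(u_i,u'_j),(u_j,u'_i)\}$ by placing one edge arbitrarily; the Ramsey property of $G$ then follows deterministically, since any homogeneous set of size $s$ in $G$ meets one copy in $\geq s/2$ vertices, yielding a clique or independent set of size $s/2$ in $H$ or $\overline{H}$, i.e.\ in $H$ after swapping the two notions. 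You instead sample the orbit contents at random subject to the covering constraint and re-prove Ramsey-ness by a first-moment argument; your handling of the dependency issue is sound (at most two internal pairs of a $k$-set share an orbit, orbits are independent, and a set containing a matched pair is never independent, so homogeneity probability is $2^{-\Omega(k^2)}$ and the union bound over $\binom{n}{k}$ sets closes for $k = \Theta(\log n)$). The trade-off: your argument is self-contained but inherently probabilistic and requires redoing the Ramsey calculation under correlated edges, whereas the paper's reduction is shorter, works as a black box on top of \emph{any} Ramsey graph (so explicit Ramsey constructions would transfer, up to the arbitrary cross-edge choices), and isolates all randomness in the choice of $H$. One cosmetic slip on your side: $\pi$ need not be (and generally is not) a graph automorphism, so the phrase ``automorphism-respecting swap'' is misleading, though nothing in your argument actually uses it.
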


\begin{proof}
    Let $H=(U = \{u_1,\dots,u_{n/2}\},F)$ be a Ramsey graph on $n/2$ vertices
    that contains neither a clique nor an independent set of size $O(\log(n))$.
    We construct $G=(V,E)$ as follows.
    The vertices of $G$ are two copies of $U$, i.e., $V = \{u_1,\dots,u_{n/2}\} \cup \{v_1,\dots,v_{n/2}\}$.
    The edges of $G$ are the following.
    \begin{enumerate}
    \item
        The vertices $\{u_1,\dots,u_{n/2}\}$ induce a copy of $H$.
        That is, $(u_i,u_j) \in E$ if and only if $(u_i,u_j) \in F$.
    \item
        The vertices $\{v_1,\dots,v_{n/2}\}$ induce the complement of $H$.
        That is, we set $(v_i,v_j) \in E$ if and only if $(u_i,u_j) \notin F$.
    \item
        For each $i,j \in [n/2]$ we have $(u_i,v_j) \in E$.
    \end{enumerate}
    By the properties of $H$ it follows that $G$ is also a Ramsey graph.
    Now, it is straightforward to check that the matching
    $M = \{(u_i,v_i) : i \in [n/2]\}$ is a 1-round strategy for acquaintance.
\end{proof}

The proof of Proposition~\ref{prop:expander} may suggest that a small
routing number (as defined by Alon et~al.~\cite{ACG}) implies fast acquaintance time.
The following example shows separation between the two parameters for the
complete bipartite graph $K_{n,n}$. It was shown in~\cite{ACG} that in $K_{n,n}$
for any permutation of the vertices $\sigma :V \to V$ the agents can be
routed from $v \in V$ to the destination $\sigma(v)$ in 4 rounds. We prove
next that $\AC(K_{n,n}) = \Theta(\log(n))$.
\begin{proposition}{\em($\AC$ of $K_{n,n}$):}\label{prop:K_{n,n}}
    Let $n = 2^r$ for some $r \in \N$.
    Let $K_{n,n} = (A,B,E)$ be complete bipartite graph
    with $|A| = |B| = n$.
    Then $\AC(K_{n,n}) = \log_2(n)$.
\end{proposition}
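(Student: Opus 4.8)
The plan is to reduce the problem to a clean combinatorial statement about binary strings, exploiting the fact that $K_{n,n}$ is complete bipartite. The crucial observation is that, for the purpose of acquaintance, the only relevant information about an agent at a given moment is which side of the bipartition it occupies: two agents share an edge if and only if they lie on opposite sides, since there are no edges inside $A$ or inside $B$. Hence, given a $T$-round strategy, I would record for each agent $p$ the binary string $x(p) = (x_0(p), x_1(p), \dots, x_T(p)) \in \{0,1\}^{T+1}$, where $x_t(p) = 0$ if $p$ is in $A$ at time $t$ (with $t=0$ the initial placement) and $x_t(p) = 1$ if it is in $B$. Two agents $p,q$ become acquainted during the strategy if and only if they are on opposite sides at some time, i.e. if and only if $x(p) \neq x(q)$. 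Thus a $T$-round strategy is a strategy for acquaintance in $K_{n,n}$ if and only if the $2n$ associated strings in $\{0,1\}^{T+1}$ are pairwise distinct.

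For the lower bound I would simply count: since the $2n$ strings must be distinct and all lie in $\{0,1\}^{T+1}$, we need $2^{T+1} \geq 2n$, which gives $T \geq \log_2(n)$. This already yields $\AC(K_{n,n}) \geq \log_2(n)$.

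For the matching upper bound I take $T = \log_2(n) = r$, so that $2^{T+1} = 2n$ and the $2n$ distinct strings must use every element of $\{0,1\}^{r+1}$ exactly once. I would assign the strings so that the $n$ agents starting in $A$ receive, bijectively, the $n$ strings whose first coordinate is $0$, and the $n$ agents starting in $B$ receive the $n$ strings beginning with $1$; this is consistent with the forced initial values of $x_0$ and makes all $2n$ rows distinct, so every pair of agents will meet. Two facts then show that this assignment is implementable as an actual sequence of matchings. First, every column is balanced: for any fixed coordinate $j$, exactly half of the strings in $\{0,1\}^{r+1}$, namely $2^r = n$ of them, have a $0$ in position $j$, so at each time step exactly $n$ agents sit in $A$ and $n$ in $B$. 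Second, each transition between consecutive columns is realizable by one matching: since columns $j$ and $j+1$ each contain exactly $n$ zeros, the number $a$ of agents that flip $0 \to 1$ equals the number that flip $1 \to 0$, so one can pair the $a$ agents currently in $A$ that must cross to $B$ with the $a$ agents currently in $B$ that must cross to $A$, and swap along these $a$ edges.

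The main point to check carefully is this last realizability step, namely that the abstract column-to-column flips can always be executed by a genuine matching of $K_{n,n}$. It hinges on the balance identity equating the number of up-flips and down-flips, which follows from the balance of the columns; the completeness of $K_{n,n}$ then removes any obstruction arising from the specific vertices the agents occupy, since only sides matter and any $a$ vertices on one side can be matched to any $a$ vertices on the other. Combining the lower and upper bounds gives $\AC(K_{n,n}) = \log_2(n)$.
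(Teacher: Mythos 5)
Your proof is correct and follows essentially the same route as the paper: encode each agent's side-of-bipartition history as a binary string in $\{0,1\}^{T+1}$, observe that two agents meet if and only if their strings differ, count distinct strings for the lower bound $2^{T+1} \geq 2n$, and realize a full assignment of strings as an explicit $r$-round strategy for the upper bound. The only difference is that you spell out the realizability of each round as a genuine matching (balanced columns, equal numbers of $0\to 1$ and $1\to 0$ flips), a detail the paper leaves implicit.
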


\begin{proof}
    Assign each agent a string $x = (x_0,x_1,\dots,x_r) \in \{0,1\}^{r+1}$ such that all agents
    who started on the same side have the same first bit $x_0$.
    We now describe an $r$-rounds strategy for acquaintance.
    In the $i$'th round move all agents with $x_i=0$ to $A$
    and all agents with $x_i=1$ to $B$.
    Now if two agents are assigned strings $x$ and $x'$ such that
    $x_i \neq x'_i$, then in the $i$'th round they will be on different sides
    of the graph, and hence will be acquainted.

    We now claim that $r$ rounds are also necessary. Indeed,
    suppose we have a $t$-rounds strategy for acquaintance.
    Assign each agent a string $x = (x_0,x_1,\dots,x_t) \in \{0,1\}^{t+1}$,
    where $x_i = 0$ for $i \leq t$ if and only if in the $i$'th round
    the agent was in $A$. Note that two agents met during the $t$ rounds
    if and only if their strings are different. This implies $2^{t+1} \geq 2n$,
    and thus $t \geq r$, as required.
\end{proof}

\section{The Range of $\AC(G)$}\label{sec:range of AC(G)}

In this section we provide examples of families of graphs on $n$ vertices
whose acquaintance time ranges from constant to $n^{1.5}$.

\begin{theorem}\label{thm:range of AC}
    For all $n \in \N$ and for all positive integers $k \leq n^{1.5}$
    there exists an $n$-vertex graph $G$ such that
    $k/c \leq \AC(G) \leq c \cdot k$ for some universal constant $c \geq 1$.
\end{theorem}

The proof of the theorem is divided into two parts.
In Proposition~\ref{prop:clique+matching} we take care of $k \leq n$,
and Proposition~\ref{prop:octopus} takes care of $n \leq k \leq n^{1.5}$.

\begin{proposition}\label{prop:clique+matching}
    For all $n \in \N$ and for all positive integers $k \leq n$
    there exists an $n$-vertex graph $G$ such that
    $k/c \leq \AC(G) \leq c \cdot k$ for some universal constant $c \geq 1$.
\end{proposition}
\begin{proof}
    In order to prove the proposition for $n \in \N$ such that $n = 0 \pmod k$
    consider the graph $G_{k,\ell} = (V,E)$ with vertices
    $V = \{v_{i,j} : i \in [k], j \in [\ell] \}$,
    where the vertices $\{v_{i,j} : j \in [\ell]\}$ form a
    clique for all $i \in [k]$, and, in addition, for every $i,i' \in [k]$
    such that $|i-i'| = 1$ we have $(v_{i,j},v_{i',j}) \in E$ for all $j \in [\ell]$.
    That is, the vertices are divided into $k$ cliques each of size $\ell$,
    and the edges between adjacent cliques form a perfect matching.

    We claim that $\AC(G_{k,\ell}) = \Theta(k)$.
    For a lower bound $\diam(G_{k,\ell}) = k$
    implies that $\AC(G_{k,\ell}) = \Omega(k)$.
    For an upper bound consider first the case of $k=2$;
    that is, the graph consisting of two disjoint cliques each of size $\ell$,
    with $\ell$ edges between them forming a perfect matching.
    Then $\AC(G_{2,\ell}) = O(1)$, which can be witnessed
    by swapping $\ell/2$ vertices in one clique with
    $\ell/2$ vertices in the other clique a constant number of times.

    The bound $\AC(G_{k,\ell}) = O(k)$ is obtained by using a strategy
    similar to the one for $P_k$ described in Proposition~\ref{prop:path},
    where we consider each clique as a single block, and each swap in $P_k$
    corresponds to a swap of the blocks, rather than single vertices.
    The only difference is the fact that even if two blocks of size
    $\ell$ are adjacent, it does not imply that all the $2\ell$ agents
    in the two blocks have met. In order to make them meet we apply the
    $O(1)$-rounds strategy above for the $G_{2,\ell}$ graph. This completes
    the proof for $n = 0 \pmod k$.

    In order to generalize the example above to general $n \in \N$
    let $\ell = \floor{n/k}$, and let $r = n \pmod k$.
    Consider the graph $G$ obtained from $G_{k,\ell}$ by adding
    to it an $r$-clique and connecting it to $r$ vertices in the last
    clique by a matching of size $r$. That is, the graph consists of
    $k$ cliques of size $\ell$ and another  clique of size $r$ with
    matchings of maximal size between consecutive cliques.
    The lower bound of $\AC(G) \geq \floor{\diam(G)/2} = \Omega(k)$ still holds, whereas for
    the upper bound we can first move the agents from the $r$-clique to meet all
    other vertices by visiting all cliques and return them back in $2k$ rounds,
    and then apply the strategy for $G_{k,\ell}$. This gives us
    $\AC(G) = \Theta(k)$.
\end{proof}

\begin{proposition}\label{prop:octopus}
    For all $n \in \N$ and for all positive integers $k \in [n,n^{1.5}]$
    there exists an $n$-vertex graph $G$ such that
    $k/c \leq \AC(G) \leq c \cdot k$ for some universal constant $c \geq 1$.
\end{proposition}
\begin{proof}
    Consider the graph $O_{r,\ell}$ that consists of $r$ cliques of size $\ell$,
    another vertex $z$ called the center, and in each clique
    one of the vertices is connected to the center.
    We claim $\AC(O_{r,\ell}) = \Theta(\min(n \ell,n r))$.
    Since the total number of vertices in the graph is $n = r \ell + 1$,
    by choosing $r \approx k/n$ and $\ell \approx n^2/k$ we will get that
    $\AC(O_{r,\ell}) = \Theta(k)$, as required.

    In order to prove an upper bound of $O(n r)$ note that solving
    the acquaintance problem on $O_{r,\ell}$ can be reduced to solving ${r \choose 2}$
    problems of Hamiltonian graphs of size $2\ell + 1$, where each problem corresponds
    to a pair of cliques with the center $z$. By Hamiltonicity each such problem
    is solved in $O(\ell)$ rounds.

    In order to prove an upper bound of $O(n \ell)$ we can bring every agent
    to the center, and all other agents will meet him in $O(\ell)$ rounds, using the
    vertices connected to $z$.

    For the lower bound define for every agent $p_i$ and every time $t \in \N$
    the variable $\varphi_t(p_i)$ to be the number of agents that $p_i$ has
    met up to time $t$. Note that for $h = \AC(O_{r,\ell})$ there is a strategy
    such that $\sum_{i \in [n]}\varphi_{h}(p_i) = n \cdot (n-1)$,
    since every agent met every other agent up to time $h$.
    On the other hand, in each time $t$ the sum $\sum_{i \in [n]} \varphi_t(p_i)$ increases
    by at most $2r+\ell$, as the only agents who could potentially affect the sum
    are those who moved to the center (contributing at most $r$ to the sum),
    an agent who moved from the center to one of the cliques (contributing at most $\ell$ to the sum),
    and the $r$ neighbors of the center (each contributing at most 1 to the sum).
    This implies a lower bound of $\AC(O_{r,\ell}) \cdot (2r + \ell) = \Omega(n^2)$,
    as required.

    In order to construct an $n$-vertex graph for general $n$ and $k$
    as in the assumption, take $r = \floor{\frac{k}{n}}$,
    $\ell = \floor{\frac{n-1}{r}}$, and $t = n-1 \pmod{\ell}$.
    Consider the $n$-vertex graph analogous to the construction above,
    that consists of $r-t$ cliques of size $\ell$, $t$ cliques of size $\ell+1$,
    and a center connected to one of the vertices is each clique.
    The argument above proves Proposition~\ref{prop:octopus}.
\end{proof}

Building on the lower bound in the proof of Proposition~\ref{prop:octopus}
we show that bottlenecks in graphs imply high acquaintance time.

\begin{proposition}\label{prop:bottlenecks}
    Let $G = (V,E)$ be a graph with $n$ vertices.
    Suppose there is a subset of vertices $S \seq V$
    such that when removing $S$ from $G$
    each connected component in the remaining graph is of size at most $\ell$.
    Then $\AC(G) = \Omega(\frac{{n \choose 2} - |E|}{|S|\cdot\ell+\sum_{s \in S}\deg(s)})$.
\end{proposition}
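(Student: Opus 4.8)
The plan is to generalize the potential argument used for the lower bound in Proposition~\ref{prop:octopus}. For an agent $p$ and a time $t$, let $\varphi_t(p)$ be the number of agents that $p$ has met up to time $t$, and set $\Phi_t = \sum_p \varphi_t(p)$, so that $\Phi_t$ equals twice the number of pairs acquainted by time $t$. Initially $\Phi_0 = 2|E|$, while in any strategy for acquaintance of length $h = \AC(G)$ we have $\Phi_h = 2\binom{n}{2}$, so $\Phi_h - \Phi_0 = 2\big(\binom{n}{2} - |E|\big)$. The proposition will follow once we show that, up to the residual term discussed below, the total growth of $\Phi$ over the whole strategy is at most $h$ times $|S|\cdot\ell + \sum_{s\in S}\deg(s)$.

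The structural fact I would exploit is that, since every component of $G[V\setminus S]$ has size at most $\ell$ and there are no edges between distinct components, every edge of $G$ is either incident to $S$ or internal to a single component. Accordingly I would classify the new acquaintances formed in a round. The easy type consists of new pairs sitting on an edge incident to $S$: there are at most $\sum_{s\in S}\deg(s)$ such edges, hence at most $\sum_{s\in S}\deg(s)$ new pairs per round. The second type consists of new pairs on an internal edge of a component $C$ where at least one of the two agents is \emph{foreign} to $C$, i.e.\ reached $C$ by crossing $S$. Since $S$ is a matching-cut, at most $|S|$ agents can cross from $S$ into components in a single round, and an agent entering a component of size at most $\ell$ can become newly acquainted with at most $\ell$ agents there; this is exactly where the factor $|S|\cdot\ell$ comes from. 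In the clique situation of Proposition~\ref{prop:octopus} these two sources were the only ones, because all agents inside a clique are acquainted from the start.

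The main obstacle is precisely the acquaintances that \emph{are not} of these two types: two agents that are both already inside the same component $C$ and become adjacent only because the round's matching shuffles them within $C$. For a general (non-clique) component such internal shuffles can create many new pairs in one round, so a naive per-round bound by the denominator fails. My plan to get around this is to charge globally rather than per round: assign each internal new acquaintance to the \emph{more recently arrived} (into its current component) of the two agents. During any single stay of an agent $X$ in $C$, the set of earlier-arrived agents sharing $C$ with $X$ can only shrink (every fresh arrival is, by definition, a later arrival), so $X$ meets at most $|C|\le\ell$ distinct earlier agents per stay; thus each crossing of $S$ is charged at most $\ell$ acquaintances. As at most $|S|$ crossings occur per round, the crossing-charged acquaintances over the whole strategy total at most $|S|\cdot\ell\cdot h$, which together with the S-incident ones gives $h\cdot\big(|S|\ell+\sum_{s\in S}\deg(s)\big)$. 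The only pairs left are those between two agents that start in the same component and never leave it; these number at most $\sum_C \binom{|C|}{2} \le \tfrac{1}{2}\ell n$, a lower-order term absorbed into the bound whenever $\binom{n}{2}-|E| = \Omega(\ell n)$, i.e.\ exactly the genuine-bottleneck regime the statement targets. Combining the three estimates gives $\binom{n}{2}-|E| \le h\cdot\big(|S|\ell+\sum_{s\in S}\deg(s)\big) + O(\ell n)$, yielding the claimed lower bound on $\AC(G)$.
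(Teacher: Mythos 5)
Your proof is correct, and at heart it is the same amortization as the paper's, just with different bookkeeping. The paper's proof enlarges the potential rather than charging: it defines $\varphi_t(p_i)$ to contain not only the agents that $p_i$ has met but also every agent that has shared a component of $G[V\setminus S]$ with $p_i$ up to time $t$. With this definition internal shuffles cost nothing --- an agent prepays for all within-component meetings of a stay (at most $\ell$ of them) at the moment it leaves $S$ --- so each round increases $\sum_i|\varphi_t(p_i)|$ by at most $|S|\cdot\ell+\sum_{s\in S}\deg(s)$, and the bound follows in a few lines. Your charging scheme (assign each internal meeting to the more recently arrived agent's current stay; each non-initial stay begins with one of the at most $|S|$ per-round crossings of $S$, which is valid because agents move along a matching so each vertex of $S$ releases at most one agent per round --- ``matching-cut'' is not the right phrase --- and each stay is charged at most $\ell$ since the earlier-arrived co-residents form a shrinking subset of the at most $\ell$ agents present on arrival) is precisely the unrolled version of that prepayment, so the two arguments are interchangeable. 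One genuine difference is in what is made explicit: the pairs that co-start in a component and meet without either agent crossing $S$ appear in your account as the explicit residual $\sum_C\binom{|C|}{2}\le\tfrac{1}{2}\ell n$, with the caveat that it is absorbed only when $\binom{n}{2}-|E|=\Omega(\ell n)$; in the paper the very same slack is hidden in the initial potential, which exceeds $2|E|$ by exactly the non-adjacent co-component pairs, and the paper passes from $n(n-1)-\sum_i|\varphi_0(p_i)|$ to $\binom{n}{2}-|E|$ without comment. So your longer write-up actually surfaces a regime restriction that the paper's proof implicitly carries as well; modulo that shared caveat and a tie-breaking rule for agents arriving in the same round, your argument is complete.
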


\begin{proof}
    Denote by $G[V \setminus S]$ the graph obtained by removing from $G$
    the vertices in $S$ and the edges touching them.
    Define for every agent $p_i$ and every time $t \in \N$
    the set $\varphi_t(p_i) \seq \{p_j : j \in [n]\}$ to contain all agents that $p_i$
    has met up to time $t$, as well as all agents who shared a connected component in $G[V \setminus S]$ with $p_i$ up to time $t$.
    By definition of $\AC$ for $h = \AC(G)$ there is a strategy such that
    $\sum_{i \in [n]}|\varphi_{h}(p_i)| = n \cdot (n-1)$,
    since every agent met every other agent up to time $h$.
	Note that in the $t$'th round the increment to $\varphi_t(p_i)$
    compared to $\varphi_{t-1}(p_i)$ is either because $p_i$ entered $S$
    and met new agents in $S$ and in its connected components of $G[V \setminus S]$,
    or because an agent left $S$ and entered one of the connected components.
    Thus, in each time $t$ the sum $\sum_{i \in [n]} |\varphi_t(p_i)|$
    increases by at most $|S|\cdot \ell+\sum_{s \in S}\deg(s)$, where $|S|\cdot \ell$
    upper bounds the number of meetings that were added because of agents moving out of $S$,
    while the value $\sum_{s \in S}\deg(s)$ bounds the number of meetings that are accounted
    for by agents that entered $S$ in round $t$. This implies a lower bound of
    $\Omega(\frac{{n \choose 2} - |E|}{|S|\cdot \ell+\sum_{s \in S}\deg(s)})$,
    which completes the proof of Proposition~\ref{prop:bottlenecks}.
\end{proof}

Next, we show that for every graph $G$ with $n$ vertices
the acquaintance time is in fact asymptotically smaller than the trivial bound of $2n^2$.
Specifically, we prove the following theorem.%
\footnote{As mentioned in the introduction, this result has been superseded
by Angel and Shinkar in~\cite{AS}.}
\begin{theorem}\label{thm:AC-bound}
    For every graph $G$ with $n$ vertices it holds that $\AC(G) = O\left(\frac{n^2}{\log(n)/\log\log(n)}\right)$.
\end{theorem}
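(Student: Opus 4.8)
The plan is to first reduce the statement to trees, and then to attack the tree case with a recursive, batched ``sweeping'' strategy whose batch size is tuned to $\Theta(\log n/\log\log n)$.

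\medskip\noindent\emph{Reduction to trees.}
First I would fix an arbitrary spanning tree $T$ of $G$. Any matching consisting of edges of $T$ is also a matching of $G$, so any strategy for $T$ is a legal strategy for $G$; moreover, whenever two agents share a tree edge they also share a $G$-edge, so every meeting that occurs in $T$ also occurs in $G$ (the extra edges of $G$ can only create additional meetings). Consequently $\AC(G)\le\AC(T)$, and it suffices to prove $\AC(T)=O\!\left(\tfrac{n^2}{\log(n)/\log\log(n)}\right)$ for every tree $T$ on $n$ vertices. Working on a tree is convenient because it lets me apply Claim~\ref{claim:routing subgraphs} freely: any $s$ agents can be relocated to any $s$ prescribed target vertices in $O(\diam(T)+s)=O(n)$ rounds.

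\medskip\noindent\emph{The batched strategy and the resulting recursion.}
Let $s$ be a batch size to be fixed later, and partition the $n$ agents into $n/s$ batches of size $s$. The strategy has a cross-batch part and a within-batch part. For the cross-batch part, each batch in turn performs one \emph{sweep} of $T$: using repeated applications of Claim~\ref{claim:routing subgraphs} the batch is moved along a DFS/Euler traversal of $T$ so that, over the course of the sweep, each of its $s$ members becomes adjacent to every agent lying outside the batch. Running the $n/s$ sweeps one after another makes every cross-batch pair acquainted. For the within-batch part I co-locate the $s$ members of each batch into a size-$s$ subtree (cost $O(n)$ by Claim~\ref{claim:routing subgraphs}), recurse to make them pairwise acquainted, and then return them. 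Writing $A(n)$ for the worst-case acquaintance time over $n$-vertex trees and $\mathrm{Sweep}(n)$ for the cost of one batch-sweep, this gives
\[
    A(n)\;\le\;\frac{n}{s}\cdot \mathrm{Sweep}(n)\;+\;\frac{n}{s}\bigl(O(n)+A(s)\bigr).
\]
The plan is then to choose $s=\Theta(\log(n)/\log\log(n))$ so that the dominant cross-batch term becomes $O(n^2/s)=O\!\left(\tfrac{n^2}{\log(n)/\log\log(n)}\right)$, while the within-batch contribution $\tfrac{n}{s}A(s)$ (together with the $O(n)$ co-location overhead summed over batches) stays of lower order.

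\medskip\noindent\emph{The main obstacle.}
The entire weight of the theorem sits in controlling $\mathrm{Sweep}(n)$ correctly, and this is the step I expect to be hard. If a single batch could sweep the tree and meet all outsiders in $O(n)$ rounds, the first term would already be $O(n^2/s)$ for any $s$, and one could push $s$ much higher; the difficulty is that this is false for trees with bottleneck vertices. The worst case is the ``spider'' tree (a center with many long legs), which is exactly the spanning tree of the octopus graph $O_{r,\ell}$ of Proposition~\ref{prop:octopus}: there, funnelling a batch of $s$ agents through a single cut vertex each time the sweep changes legs is the expensive operation, and the lower-bound mechanism of Proposition~\ref{prop:octopus} shows this genuinely costs extra. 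The crux is therefore to prove a bound on $\mathrm{Sweep}(n)$, and on the within-batch term, sharp enough that the two terms of the recursion balance precisely at $s=\Theta(\log(n)/\log\log(n))$, which is where the stated saving over the trivial $O(n^2)$ bound of Corollary~\ref{cor:hamiltonian}-type spanning-tree traversals comes from. I anticipate that the cleanest accounting charges the multiplicities with which vertices occur in the Euler traversal while several batch-sweeps are pipelined simultaneously, and that it is exactly this charging — rather than the optimal per-vertex charging available through a finer argument — that produces the $\log\log(n)/\log(n)$ factor and falls short of the optimal $O(n^{1.5})$ attainable by a sharper sweep analysis.
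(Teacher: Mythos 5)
There is a genuine gap: your argument stops exactly at the step that carries all of the content. You never state, let alone prove, any bound on $\mathrm{Sweep}(n)$ --- you say yourself that it is ``the step I expect to be hard'' and only ``anticipate'' a charging argument for pipelined sweeps. Without a concrete bound there is nothing to balance, so the choice $s=\Theta(\log(n)/\log\log(n))$ has no source in the argument as written; indeed, by your own remark, if a batch could sweep in $O(n)$ rounds one could take $s$ as large as $\sqrt{n}$, so the specific $\log\log$ factor cannot emerge from this recursion until the sweep cost is actually pinned down. The difficulties are real and not merely technical: on a tree a ``snake'' of $s$ agents must reverse at every branch, each reversal and each passage through a cut vertex carries an $\Omega(s)$-type overhead, the sweep itself permutes the outsiders it is supposed to meet, and running several sweeps simultaneously compounds all of this. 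Your spider example shows you see the problem, but identifying the obstruction is not the same as overcoming it.

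The paper avoids the sweep analysis entirely via a dichotomy, and this is the idea your plan is missing. Let $k$ be the largest integer with $k^k \leq n$, so $k = \Theta(\log(n)/\log\log(n))$. Then either $G$ contains a simple path of length $k$, or it has a vertex of degree at least $k$ (a spanning tree with all degrees below $k$ and depth below $k$ has fewer than $k^k \leq n$ vertices). In the first case, Claim~\ref{claim:AC-bound long path} gives $\AC(G) = O(n^2/\ell)$ for a path of length $\ell$: partition the agents into $O(n/\ell)$ classes of size $\floor{\ell/2}$, route each pair of classes onto the path via Claim~\ref{claim:routing subgraphs}, and run the path strategy of Proposition~\ref{prop:path} --- this is the only part of the paper's proof that resembles your batched routing, and note it routes onto a \emph{path}, where the sweep is trivial. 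In the second case, Claim~\ref{claim:AC-bound maxdeg} gives $\AC(G) = O(n^2/\Delta)$: the $\Delta$ agents at the children of a degree-$\Delta$ root meet everyone in $O(n)$ rounds, are parked at leaves and ignored, and the process repeats on the remaining $n-\Delta$ agents. The spider tree you flag as the crux is exactly the instance the degree claim handles (the center has degree $r$), so the dichotomy sidesteps the bottleneck rather than charging for it; and the $\log\log$ factor comes from the combinatorial threshold $k^k \leq n$, not from balancing a recursion. Your reduction to spanning trees and your use of Claim~\ref{claim:routing subgraphs} are sound and consistent with the paper, but to salvage your single-strategy approach you would have to prove a sweep bound strong enough to cover high-degree trees, which is precisely what the case split makes unnecessary.
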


The proof of Theorem~\ref{thm:AC-bound} relies on the following two claims.

\begin{claim}\label{claim:AC-bound long path}
    Let $G$ be a graph with $n$ vertices. If $G$ contains a simple path of length $\ell$,
    then $\AC(G) = O(n^2/\ell)$.
\end{claim}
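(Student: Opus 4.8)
The plan is to exploit the long simple path as a ``highway'' along which we can shuttle batches of agents past one another, reusing the idea from Proposition~\ref{prop:path}. Let $P = (w_1, \dots, w_\ell)$ be the given simple path of length $\ell$ in $G$. The vertices of $P$ form a sub-path on which we can run the recursive path strategy of Proposition~\ref{prop:path}, so any $k$ agents currently sitting on $P$ can be made to pairwise meet in $O(\ell)$ rounds (note that $k \le \ell$). The only extra ingredient beyond Proposition~\ref{prop:path} is the ability to load an arbitrary batch of agents onto the path and later return them; for this I would invoke Claim~\ref{claim:routing subgraphs} applied to a spanning tree of $G$, which routes any size-$k$ source set to any size-$k$ target set in $O(n)$ rounds.

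First I would partition the $n$ agents into $c = \lceil 2n/\ell \rceil$ classes $C_1, \dots, C_c$, each of size at most $\ell/2$, exactly as in the proof of Proposition~\ref{prop:expander}. Then, for each unordered pair $\{i,j\}$ of classes, I would: (i) route the agents of $C_i \cup C_j$ (at most $\ell$ agents) onto the path $P$ using Claim~\ref{claim:routing subgraphs}, costing $O(n)$ rounds; (ii) apply the path strategy of Proposition~\ref{prop:path} restricted to $P$ so that every pair of these agents meets, costing $O(\ell)$ rounds; (iii) route every agent back to its home vertex, again $O(n)$ rounds via Claim~\ref{claim:routing subgraphs}. After iterating over all $\binom{c}{2}$ pairs, every two agents have shared the path in some common phase and are therefore acquainted. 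The total cost is $\binom{c}{2} \cdot O(n + \ell)$, and since $c = O(n/\ell)$ and $\ell \le n$, this is $O((n/\ell)^2 \cdot n) = O(n^3/\ell^2)$.

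That bound is too weak, so the structure above needs refinement: I expect the main obstacle to be shaving the cost from $O(n^3/\ell^2)$ down to the claimed $O(n^2/\ell)$. The wasteful step is the $O(n)$ routing performed afresh for each of the $\binom{c}{2}$ pairs. The fix is to amortize the routing: fix a class $C_i$ and keep its agents parked on one end of $P$ while cycling the \emph{other} class $C_j$ through the rest of $P$, so that a single loading of $C_i$ is reused across all $j$. Concretely, for each fixed $i$, load $C_i$ onto the left half of $P$ once (at cost $O(n)$), and then for each $j \ne i$ route only $C_j$ onto the right half and run the path meeting for $C_i \cup C_j$; the meeting phase costs $O(\ell)$ and the per-$j$ routing of a single class can be charged against moving $O(\ell)$ agents, but it still traverses the graph, so the more careful accounting is to bound the routing of $C_j$ by $O(n)$ only once it is amortized. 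The honest way to reach $O(n^2/\ell)$ is to observe that we need only $O(c) = O(n/\ell)$ loading operations in total if we let each class meet all others while passing once along the path in a pipelined sweep, giving $O(n/\ell)$ phases each costing $O(n)$, for a total of $O(n^2/\ell)$.

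The delicate point — and where I would spend the most care — is making this pipelining rigorous: I must verify that during a single sweep of all agents along $P$ every pair genuinely co-occupies the path long enough to meet, and that the matchings used for routing different classes do not conflict (they act on disjoint portions of the graph or in disjoint rounds). I would handle conflicts by separating the loading, meeting, and unloading into disjoint round-blocks, accepting a constant-factor overhead, which is harmless for the asymptotic claim. Establishing that the pipelined meeting schedule along $P$ really acquaints every cross-class pair is essentially a bookkeeping version of the recursive argument in Proposition~\ref{prop:path}, and once that is in place the round count is $O(n/\ell) \cdot O(n) = O(n^2/\ell)$, as required.
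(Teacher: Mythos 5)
There is a genuine gap. Your first scheme is exactly the paper's scheme, and the paper rescues it not by amortizing the routing but by a preliminary observation you are missing: one may assume without loss of generality that $\ell$ is the length of the \emph{longest} simple path in $G$ (proving the claim for the longest path only strengthens the bound), and one may replace $G$ by a spanning tree containing $P$. Under this assumption every pairwise distance in the tree is at most $\ell$, since the (unique, simple) tree path between any two vertices is a simple path of $G$. Consequently each invocation of Claim~\ref{claim:routing subgraphs} for a pair $C_i \cup C_j$ routes $k \leq \ell$ agents over distance at most $\ell$, costing $\ell + 2(k-1) = O(\ell)$ rounds --- not the $O(n)$ you charged. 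With this correction your original, unamortized accounting already closes: ${c \choose 2} \cdot O(\ell) = O\bigl((n/\ell)^2 \cdot \ell\bigr) = O(n^2/\ell)$, and no pipelining is needed.

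Your fallback pipelining argument does not repair the gap, because its central claim is false as stated: it is not true that ``each class meets all others while passing once along the path in a pipelined sweep.'' The path holds at most $\ell$ agents, i.e., at most two classes of size $\ell/2$ at a time, so during a single sweep a class co-occupies $P$ only with the classes adjacent to it in the stream order; one sweep therefore covers only $O(c)$ of the $\Theta(c^2)$ class pairs. A correct pipelined variant would need $\Theta(c)$ round-robin sweeps, and bounding each sweep by $O(n)$ rounds again requires controlling the distance from the agents to the path --- which brings you back to the longest-path observation above. You should also note that Proposition~\ref{prop:path} applies to the subgraph $P$ only if the matchings you use stay inside $P$ while other agents are parked off it; separating rounds into disjoint blocks, as you suggest, handles this, but in the paper's (and your first) scheme it is immediate since only the agents of $C_i \cup C_j$ are in play during a phase.
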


\begin{claim}\label{claim:AC-bound maxdeg}
    Let $G$ be a graph with $n$ vertices. If $G$ has a vertex of degree $\Delta$,
    then $\AC(G) = O(n^2/\Delta)$.
\end{claim}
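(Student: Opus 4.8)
The plan is to use the degree-$\Delta$ vertex $v$ as a hub, exploiting the fact that the star $S = \{v\} \cup N(v)$ behaves like a miniature octopus center: whenever an agent sits at $v$ it is simultaneously adjacent to the (up to) $\Delta$ agents occupying $N(v) = \{u_1,\dots,u_\Delta\}$, so a single round at the hub can create $\Delta$ new meetings at once. This is the right target: the edge-counting heuristic suggests that, once attention is restricted to the hub, meetings accumulate at rate at most roughly $\Delta$ per round, so $\binom{n}{2}/\Delta = \Theta(n^2/\Delta)$ rounds are what a hub-based strategy should aim for, and I want to match this.

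First I would fix a spanning tree $T$ of $G$ rooted at $v$ and regard the subtrees hanging from the neighbors $u_1,\dots,u_\Delta$ as $\Delta$ ``tentacles'' $T_{u_1},\dots,T_{u_\Delta}$ of sizes $s_1,\dots,s_\Delta$ with $\sum_i s_i = n-1$; this is exactly the octopus structure of Proposition~\ref{prop:octopus}, with the center $z$ replaced by $v$ and the cliques replaced by the tentacle-subtrees. Mimicking the $O(n\ell)$ upper bound established there, I would park one designated agent $a$ at $v$ and then, in parallel across all $\Delta$ tentacles, cycle the agents of each tentacle past its root $u_i$ (a neighbor of $v$) so that every agent of $T_{u_i}$ occupies $u_i$ at some round; using Claim~\ref{claim:routing subgraphs} inside each tentacle this cycling takes $O(s_i)$ rounds, so running all tentacles simultaneously lets $a$ meet all $n-1$ other agents in $O(\max_i s_i)$ rounds. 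Repeating this for each of the $n$ choices of the central agent $a$ gives a strategy of length $O(n \cdot \max_i s_i)$.

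The main obstacle is that this bound is only $O(n^2/\Delta)$ when the tentacles are balanced, i.e. $\max_i s_i = O(n/\Delta)$, whereas a spanning tree of $G$ may have a single tentacle containing almost all the vertices (think of a star with one long path attached). So the crux is a rebalancing step: before running the hub procedure I would use the routing guarantee of Claim~\ref{claim:routing subgraphs} to redistribute the agents so that each of the $\Delta$ positions of $N(v)$ is fed a fresh agent at amortized rate $\Delta$ per round, which amounts to sustaining a flow of $\Delta$ agents per round into the neighborhood of $v$. I expect bounding the cost of sustaining this feed rate to be the hard part: in the worst case the bottleneck is the cut separating $N(v)$ from the rest of the graph, and one must argue either that the tree can be re-rooted and split into $\Delta$ nearly-equal tentacles of size $O(n/\Delta)$, or that a tentacle too large to balance itself already contains enough structure (a long path, handled by Claim~\ref{claim:AC-bound long path}) to finish the job recursively. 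Charging the rebalancing against the $\Theta(\Delta)$ meetings produced per hub-round, rather than per pair of agents, is what keeps the total at $O(n^2/\Delta)$ and avoids the naive $O(n^3/\Delta^2)$ blow-up that results from routing each pair of groups onto the hub independently.
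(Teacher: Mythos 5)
Your hub scheme --- one designated agent $a$ parked at the degree-$\Delta$ vertex $v$, all tentacles cycled in parallel through $N(v)$ --- yields $O(n\cdot\max_i s_i)$ rounds, and as you yourself note this is $O(n^2/\Delta)$ only when the tentacles are balanced. The rebalancing step you flag as ``the hard part'' is a genuine gap, and neither of your fallback routes closes it. Re-rooting cannot in general produce $\Delta$ tentacles of size $O(n/\Delta)$, and an unbalanced tentacle need not contain a long path: consider a caterpillar, i.e.\ a spine of length $m\approx n/\Delta$ each of whose vertices carries $\Delta-2$ pendant leaves. Its maximum degree is $\Delta$, every choice of root leaves one tentacle of size $\Omega(n)$, and its longest simple path has length $O(n/\Delta)$, so Claim~\ref{claim:AC-bound long path} only gives $O(n\Delta)$, which is worse than the target $O(n^2/\Delta)$ once $\Delta\gg\sqrt{n}$. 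So your argument as written establishes the claim only for spanning trees that happen to split evenly at the hub, and the charging scheme meant to handle the general case is not supplied.

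The paper's proof avoids balancing altogether by inverting the roles in your construction: it parks a \emph{batch} of $\Delta$ designated agents $p_1,\dots,p_\Delta$ on the neighbors $v_1,\dots,v_\Delta$ of the hub $r$, and cycles all the other agents through the single vertex $r$ itself, tentacle by tentacle \emph{sequentially} (after first letting each $p_i$ sweep its own subtree $T_i$ by DFS; the passage through $r$ is implemented via Claim~\ref{claim:routing subgraphs} on an auxiliary tree with $|V_i|$ extra leaves attached to $r$). Each visit to $r$ makes the visiting agent adjacent to all $\Delta$ designated agents simultaneously, so the hub runs at full rate $\Delta$ meetings per visiting agent no matter how lopsided the tentacles are, and the sequential processing costs $\sum_i O(|T_i|)=O(n)$ per batch rather than $O(\max_i s_i)$ per single agent. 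The batch is then routed to $\Delta$ unused leaves and ignored, and the process repeats on the remaining agents, giving $\sum_{i=1}^{n/\Delta} O(n-i\Delta)=O(n^2/\Delta)$ with no rebalancing, no flow argument, and no recursion on tentacle structure. This batching of $\Delta$ agents per $O(n)$-round pass is exactly the amortization your ``flow of $\Delta$ agents per round'' was groping for, but it is realized by swapping which side of the hub the designated agents occupy, not by redistributing the tentacles.
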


We postpone the proofs of the claims until later,
and show how to deduce Theorem~\ref{thm:AC-bound} from them.

\begin{proof-of-thm}{\ref{thm:AC-bound}}
    Let $k = \Theta\left(\frac{\log(n)}{\log\log(n)}\right)$
    be the largest integer such that $k^k \leq n$. For such a choice of $k$ the graph $G$
    either contains a simple path of length $k$, or it contains a vertex of degree at
    least $k$. In the former case by Claim~\ref{claim:AC-bound long path} we have
    $\AC(G) = O(n^2/k)$. In the latter case we use Claim~\ref{claim:AC-bound maxdeg}
    to conclude that $\AC(G) = O(n^2/k)$. The theorem follows.
\end{proof-of-thm}

\noindent
We now prove Claims~\ref{claim:AC-bound long path} and~\ref{claim:AC-bound maxdeg}.

\begin{proofof}{Claim~\ref{claim:AC-bound long path}}
    Assume without loss of generality that $\ell$ is the length
    of the longest simple path in $G$. Then, in particular, we
    have $\dist(u,v) \leq \ell$ for every two vertices $u,v \in V$.
    We shall also assume that $G$
    is a tree that contains a path of length $\ell$ (if not, apply the
    argument below on a spanning tree of $G$, which is enough as $\AC(G)$
    is upper bounded by $\AC$ of its spanning tree).

    In order to prove the claim we apply Claim~\ref{claim:routing subgraphs}
    together with Proposition~\ref{prop:path}
    similarly to the proof of Proposition~\ref{prop:expander}.
    Divide the agents into $c = O(n/\ell)$ subsets $C_1,\dots,C_c$ of
    size at most $\floor{\ell/2}$ each.
    For every pair $i,j \in [c]$ we use Claim~\ref{claim:routing subgraphs}
    to route the agents from the two subsets $C_i \cup C_j$ to a path of length $\ell$,
    and then apply the strategy from Proposition~\ref{prop:path}
    so that every pair of agents from $C_i \cup C_j$ meet.
    By repeating this strategy for every $i,j \in [c]$, we make sure
    that every pair of agents on $G$ meet each other.

    Since $\dist(u,v) \leq \ell$ for every two vertices $u,v \in V$,
    by Claim~\ref{claim:routing subgraphs} the agents $C_i \cup C_j$
    can be routed to a path of length $\ell$ in $O(\ell)$ rounds.
    Then, using the strategy from Proposition~\ref{prop:path}
    every pair of agents from $C_i \cup C_j$ meet in at most $O(\ell)$.
    Therefore, the acquaintance time of $G$ can be upper bounded by
    \[
        \AC(G) \leq {c \choose 2} \cdot O(\ell) = O(n^2/\ell),
    \]
    as required.
\end{proofof}

\begin{proofof}{Claim~\ref{claim:AC-bound maxdeg}}
    Assume without loss of generality that $G$ is a tree rooted
    at a vertex $r$ of degree $\Delta$. (This can be done by considering
    a spanning tree of $G$.) Denote the children of $r$ by $v_1,\dots,v_\Delta$,
    and let $p_1,\dots,p_\Delta$ be the agents originally located at these vertices.
    We claim that there is an $O(n)$-rounds strategy that allows $p_1,\dots,p_\Delta$
    to meet all agents.

    Given such a strategy, we apply it on $G$ repeatedly,
    with $\Delta$ new agents in $v_1,\dots,v_\Delta$ in each iteration.
    The agents can be placed there in $n + 2\Delta$ rounds using
    Claim~\ref{claim:routing subgraphs}.
    Repeating the process we get that
    $AG(G) \leq \sum_{i=1}^{\ceil{n/\Delta}}O(n+\Delta) = O(n^2/\Delta)$, as required.
%

    Next, we describe an $O(n)$-rounds strategy that allows $p_1,\dots,p_\Delta$ to meet
    all agents. For any $1 \leq i \leq \Delta$ consider a subtree $T_i=(V_i,E_i)$
    of $G$ rooted at $v_i$. It is enough to show how the agents $p_1,\dots,p_\Delta$
    can meet all agents from $T_i$ in $O(|T_i|)$ rounds.
    First, let $p_i$ meet all agents in $T_i$ in $O(|T_i|)$ steps and return back to $v_i$.
    This can be done by running $p_i$ along a DFS of $T_i$. It is enough now to find a
    $O(|T_i|)$- rounds strategy that allows all agents of $T_i$ to visit the root $r$.
    This task can be reduced to the routing problem considered in Claim~\ref{claim:routing subgraphs}.
    Specifically, define a tree $T'$ on $2|V_i|+1$ vertices rooted at $r$ that contains the tree $T_i$ with
    additional $|V_i|$ vertices each connected only to the root $r$.

    By Claim~\ref{claim:routing subgraphs} there is a $O(|T_i|)$-rounds strategy
    in $T'$ that routes all agents from the copy of $T_i$ to the additional $|V_i|$ vertices.
    It is easy to see that this strategy
    can be turned into a strategy that allows all agents of $T_i$ to visit in the root $r$
    by disregarding the edges between $r$ and the additional (imaginary) vertices.
    This completes the proof of Claim~\ref{claim:AC-bound maxdeg}.
\end{proofof}

\section{$\NP$-Hardness Results}\label{sec:NP-hardness}

In this section we show that the acquaintance time problem is $\NP$-hard.
Specifically, we prove the following theorem.
\begin{theorem}\label{thm:NP-hardness}
    For every $t \geq 1$ it is $\NP$-hard to distinguish whether
    a given graph $G$ has $\AC(G) \leq t$ or $\AC(G) \geq 2t$.
\end{theorem}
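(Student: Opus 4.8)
The plan is to reduce from graph coloring, which is $\NP$-complete, to the gap version of the acquaintance problem. I would first isolate the base case $t=1$, i.e.\ the claim that deciding $\AC(G)\le 1$ versus $\AC(G)\ge 2$ is already $\NP$-hard, and then amplify the gap to a general constant $t$.

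For the base case, the first step is to record a clean combinatorial reformulation of $\AC(G)=1$. A single matching $M$ induces an involution $\pi=\pi_M$ on $V$ (swap the two endpoints of each edge of $M$, fix every other vertex), and the two agents starting at $u,v$ become acquainted within one round iff $\{u,v\}\in E$ or $\{\pi(u),\pi(v)\}\in E$. Hence $\AC(G)\le 1$ iff there is a matching $M\seq E$ whose involution satisfies $E\cup\pi(E)=\binom{V}{2}$; equivalently, $\pi$ embeds the complement $\overline{G}$ into $G$, carrying every non-edge to an edge. The Ramsey example (Proposition~\ref{prop:ramsey AC=1}) is exactly a graph in which a prescribed perfect matching realizes this covering, and I would build the reduction by making the \emph{existence} of such a matching equivalent to the existence of a proper coloring of an input graph $H$: the freedom in choosing which vertices $M$ pairs up is forced to encode the color assignment, while the edges of $H$ are planted so that a monochromatic edge leaves some pair uncovered and thus rules out $\AC=1$.

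For completeness (colorable $\Rightarrow$ small $\AC$), I would use the color classes as a routing schedule, treating each class as a single block: the routing of Claim~\ref{claim:routing subgraphs} together with the block-swapping strategy behind Proposition~\ref{prop:path} and Proposition~\ref{prop:clique+matching} brings every pair of classes together in few rounds, so a valid $c$-coloring yields a strategy of the target length $t$. For soundness (non-colorable $\Rightarrow$ $\AC\ge 2t$), I would engineer a bottleneck into the gadget and invoke Proposition~\ref{prop:bottlenecks}: when no proper coloring exists, too many agent-pairs are forced to communicate through a small separator $S$, and the resulting bound $\Omega\!\left(\frac{\binom n2-|E|}{|S|\ell+\sum_{s\in S}\deg(s)}\right)$ is calibrated to exceed $2t$. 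Tuning the sizes of the blocks and of the separator is what produces an exact factor-$2$ separation rather than a weaker gap.

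Finally, to obtain the statement for every constant $t$ (also needed for the additive-approximation consequence, which requires arbitrarily large $t$), I would either make the gadget parametrized so that the separator size scales linearly with $t$, giving the full range in one construction, or prove the $t=1$ hardness and then apply a graph transformation $G\mapsto G^{(t)}$ (a blow-up or subdivision) that multiplies $\AC$ by $\Theta(t)$, carrying the $\{=1\}$ versus $\{\ge 2\}$ separation to $\{\le t\}$ versus $\{\ge 2t\}$. The main obstacle I anticipate is the gadget design itself: one must simultaneously (i) make colorability \emph{equivalent} to the existence of the cheap strategy and not merely sufficient, and (ii) push the soundness lower bound all the way up to $2t$, so that the completeness upper bound and the bottleneck lower bound meet with a clean multiplicative factor of exactly two.
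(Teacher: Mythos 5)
Your reformulation of $\AC(G)\le 1$ (a matching whose involution $\pi$ maps every non-edge to an edge) is correct and matches Proposition~\ref{prop:AC=1}, but the reduction you sketch has two genuine gaps. First, reducing from \emph{exact} colorability is too weak for the soundness you need. The paper reduces from gap problems --- Lund--Yannakakis for $t=1$ and Khot (Theorem~\ref{thm:Khot's hardness}) for general $t$ --- whose NO side is not ``$\chi(G)>K$'' but ``$\alpha(G)$ is tiny.'' This matters: in the gadget, $V$ induces $\overline{G}$, and a fast strategy only requires agents to land in large cliques of $\overline{G}$, i.e.\ large independent sets of $G$. A non-$K$-colorable graph can still have linear-size independent sets, in which case your intended implication ``non-colorable $\Rightarrow \AC\ge 2t$'' simply fails. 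The gap of the source problem is where the factor-$2$ separation comes from; it cannot be conjured by ``tuning block sizes'' starting from an exact decision problem. Second, invoking Proposition~\ref{prop:bottlenecks} for soundness cannot work even in principle: the separator structure of the output graph $H$ is fixed by the construction and is \emph{identical} for YES and NO instances, so any bottleneck bound forcing $\AC(H)\ge 2t$ would apply equally in the completeness case and destroy the reduction. (Moreover, an asymptotic $\Omega(\cdot)$ bound cannot certify a fixed constant threshold like $2t$.) The paper's soundness is instead a trajectory-counting argument: given a $(2t-1)$-round strategy, an averaging step extracts a set $P_0$ of $\frac{n}{2(2t-1)}$ agents from $U$ that visit $V$ at most once, the rounds refine $P_0$ into at most $(tK)^{2t}$ clusters of pairwise-unacquainted agents, and the largest cluster must end inside a clique of $H[V]=\overline{G}$, contradicting $\alpha(G)\le \frac{1}{4t^{2t+1}K^{2t}}$.

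Your fallback amplification $G\mapsto G^{(t)}$ is also unsupported: no blow-up or subdivision is known (and none appears in the paper) to scale $\AC$ by $\Theta(t)$ with the constant-level control needed to turn a $1$-vs-$2$ gap into $t$-vs-$2t$; note the paper leaves even the $1$-vs-$t$ separation as Conjecture~\ref{conj:NP-hardness 1 vs t}, precisely because such amplification is not available. The paper instead handles general $t$ by re-parametrizing the gadget with $t$ layers $U_{i,j}$ of independent sets (so completeness takes exactly $t$ rounds of swaps $U_{i,j}\leftrightarrow V_j$ along an equitable coloring) and by using Khot's quantitatively stronger gap, which is calibrated so that the clustering bound survives $2t-1$ rounds. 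Your completeness idea (routing color classes as blocks) would work, but it is the soundness side --- replacing bottlenecks with the cluster-refinement argument and replacing exact coloring with a gap coloring/independence problem --- that your proposal is missing.
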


Before actually proving the theorem, let us first see the proof for the special case of $t=1$.

\paragraph{Special case of $t=1$:}
    We start with the following $\NP$-hardness result,
    saying that for a given graph $G$ it is hard to distinguish between graphs
    with small chromatic number and graphs with somewhat large independent set.
    Specifically, Lund and Yanakakis~\cite{LY94} prove the following result.
    \begin{theorem}[{\cite[Theorem 2.8]{LY94}}]
        For every $K \in \N$ sufficiently large
        the following gap problem is $\NP$-hard.
        Given a graph $G = (V,E)$ distinguish between the following two cases:
        \begin{itemize}
        \item
            $\chi_\eq(G) \leq K$; i.e., there exists a $K$-coloring of the vertices of
            $G$ with color classes of size $\frac{|V|}{K}$ each.%
            \footnote{The statement of Theorem 2.8 in~\cite{LY94} says that
            $G$ is $K$-colorable. However, it follows from the proof
            that in fact $G$ is equi-$K$-colorable.}
        \item
            $\alpha(G) \leq \frac{n}{2K}$.%
            \footnote{The statement of Theorem 2.8 in~\cite{LY94} says that
            $\chi(G) \geq 2K$. However, the proof implies that in fact
            $\alpha(G) \leq \frac{n}{2K}$.}
        \end{itemize}
    \end{theorem}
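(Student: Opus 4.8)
The plan is to prove this as a consequence of the PCP theorem, since a constant-factor inapproximability result for the chromatic number cannot follow from a gap-free reduction. I would start from the two-prover one-round (label-cover) formulation: there is a polynomial-time reduction from SAT producing a bipartite constraint instance with question sets $X,Y$, a label set $[L]$, and a projection constraint on each edge, such that in the YES case some labeling satisfies every constraint, while in the NO case every labeling satisfies at most an $\eps$-fraction of the constraints, where $\eps>0$ can be made an arbitrarily small constant via parallel repetition. The target factor-$2$ gap in the chromatic number, for an arbitrarily large constant $K$, will be obtained by taking $\eps$ small enough relative to $K$.

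The core of the construction is to attach a combinatorial covering gadget, a partition system, to the label-cover instance. A partition system on a ground set of size $m$ is a collection of partitions, one per label $a\in[L]$, each splitting the ground set into $K$ blocks, arranged so that the only economical way to cover the ground set by independent blocks is to use all $K$ blocks of a single partition, whereas any cover that mixes blocks from many different partitions is forced to be much larger. I would build $G$ so that each proper color class corresponds to choosing, at every question vertex, one block of one of its partitions, with the edges of $G$ forbidding color classes that combine blocks from inconsistent labels across a constraint edge. A globally consistent labeling (the YES case) then lets every question vertex use the single partition dictated by its label, so the whole vertex set decomposes into exactly $K$ equinumerous independent color classes, giving a balanced coloring, i.e.\ $\chi_\eq(G)\le K$. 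Conversely, in the NO case no labeling is consistent on more than an $\eps$-fraction of constraints, so no independent set can align with a single partition on more than a tiny fraction of the vertices; the partition-system property then forces every independent set to have size at most $\frac{|V|}{2K}$, whence $\alpha(G)\le\frac{|V|}{2K}$ and therefore $\chi(G)\ge |V|/\alpha(G)\ge 2K$.

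To turn the soundness analysis into the sharp bound $\alpha(G)\le \frac{|V|}{2K}$ I would argue contrapositively: a large independent set would have to cover a constant fraction of the question vertices using blocks drawn from few partitions, and by the partition-system guarantee this can happen only if those vertices share a common label on their incident constraints; decoding such an independent set back into a labeling would then satisfy more than an $\eps$-fraction of the constraints, contradicting soundness. The balancedness of the YES coloring can be arranged by making all ground sets the same size and padding the question vertices so that the $K$ blocks come out equinumerous across the whole instance.

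The step I expect to be the main obstacle is calibrating the parameters of the gadget, namely the ground-set size $m$, the number of blocks $K$, and the PCP soundness $\eps$, so that both requirements hold simultaneously with the \emph{same} constant $K$: a clean partition into exactly $K$ equal independent sets in the completeness case, and the precise factor-$2$ collapse $\alpha(G)\le \frac{|V|}{2K}$ in the soundness case. Everything downstream is a careful but routine gap computation; the genuine content, and the reason the statement is deep rather than elementary, is the PCP theorem feeding the reduction together with the combinatorial design that converts its constant soundness gap into a factor-$2$ gap in the chromatic number.
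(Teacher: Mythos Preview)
The paper does not prove this theorem at all: it is quoted verbatim as a known result of Lund and Yannakakis~\cite{LY94} and used as a black box to feed the reduction to the acquaintance-time problem. So there is no ``paper's own proof'' to compare against here.

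That said, your outline is a faithful high-level description of how Lund and Yannakakis actually establish the result in~\cite{LY94}: start from a two-prover one-round PCP (label cover) with arbitrarily small constant soundness, attach a partition-system gadget at each question vertex, and argue that in the YES case the consistent labeling yields a balanced $K$-coloring while in the NO case any large independent set would decode to a labeling violating soundness. The identification of the partition system as the crucial combinatorial ingredient, and of the PCP theorem as the source of the gap, is exactly right. Your caveat about calibrating $m$, $K$, and $\eps$ is also the genuine technical point in making the argument go through with the stated constants. In short, your proposal is correct and matches the original Lund--Yannakakis proof; it simply goes well beyond what the present paper does, which is merely to invoke the theorem.
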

    We construct a reduction from the problem above to the acquaintance time problem,
    that given a graph $G$ outputs a graph $H$ so that (1) if $\chi_\eq(G) \leq K$, then $\AC(H) = 1$,
    and (2) if $\alpha(G) \leq \frac{n}{2K}$, then $\AC(H) \geq 2$.

    Given a graph $G = (V,E)$ with $n$ vertices $V = \{ v_i: i \in [n]\}$,
    the reduction outputs a graph $H=(V',E')$ as follows.
    The graph $H$ contains $|V'| = 2n$ vertices,
    partitioned into two parts $V' = V \cup U$, where
    $|V| = n$ and $U = U_1 \cup \dots \cup U_K$
    with $|U_j| = n/K$ for all $j \in [K]$.
    The vertices $V$ induce the complement graph of $G$.
    For each $j \in [K]$ the vertices of $U_j$ form an independent set.
    In addition, we set edges between every pair of vertices $(v,u) \in V \times U$
    as well as between every pair of vertices in $(u,u') \in U_j \times U_{j'}$ for all $j \neq j'$.
    This completes the description of the reduction.

    \paragraph{Completeness:}
    We first prove the completeness part, namely, if $\chi_\eq(G) = K$, then $\AC(H) = 1$.
    Suppose that the color classes of $G$ are $V = V_1 \cup \dots \cup V_K$
    with $|V_j| = n/K$ for all $j \in [K]$. Note that each color class $V_j$ induces a clique in $H$.
    Consider the matching that
    for each $j \in [K]$ swaps the agents in $U_j$ with the agents in $V_j$.
    (This is possible since by the assumption $|V_j| = \frac{n}{K}$,
    and all vertices of $U_j$ are connected to all vertices of $V_j$.)
    In order to verify that such matching allows every pair of agents to meet each other,
    let us denote by $p_v$ the agent sitting originally in vertex $v$.
    Note that before the swap all pairs listed below have already met.
    \begin{enumerate}
        \item
            For all $j \in [K]$ and every $v,v' \in V_j$ the pair of agents $(p_{v},p_{v'})$ have met.
        \item
            For all $j \neq j'$ and for every $u \in U_j$, $u' \in U_{j'}$
            the pair of agents $(p_{u},p_{u'})$ have met.
        \item
            For all $v \in V$ and $u \in U$
            the pair of agents $(p_{v},p_{u})$ have met.
    \end{enumerate}
    After the swap the following pairs meet.
    \begin{enumerate}
        \item
            For all $j \neq j'$ and for every $v \in V_j$, $v' \in V_{j'}$
            the agents $p_{v}$ and $p_{v'}$
            meet using an edge between $U_j$ and $U_j'$.
        \item
            For all $j \in [K]$ and every $u,u' \in V_j$
            the agents $p_{u}$ and $p_{u'}$ meet using an edge in $V_j$.
    \end{enumerate}
    This completes the completeness part.

    \paragraph{Soundness:}
    For the soundness part assume that $\AC(H) = 1$. We claim that
    $\alpha(G) > 1/2K$. Note first that if there is a single matching
    that allows all agents to meet, then for every $j \in [K]$
    all but at most $K$ agents from $U_j$ must have been moved by the matching to $V$.
    (This holds since $U$ does not contain $K+1$ clique.)
    Moreover, all the agents from $U_j$ who moved to $V$ must have moved
    to a clique induced by $V$.
    This implies that $V$ contains a clique of size at least $n/K - K > n/2K$,
    which implies that $\alpha(G) > n/2K$.

    \medskip
    This completes the proof of Theorem~\ref{thm:NP-hardness} for the special case of $t=1$.
    The proof of Theorem~\ref{thm:NP-hardness} for general $t \geq 2$ is quite similar,
    although it requires some additional technical details.

\begin{proofof}{Theorem~\ref{thm:NP-hardness}}
    We start with the following $\NP$-hardness result due to Khot~\cite{Khot01}, saying
    that for a given graph $G$ it is hard to distinguish between graphs with small chromatic
    number and graphs with small independent set. Specifically, Khot proves the following result.
    \begin{theorem}[{\cite[Theorem 1.6]{Khot01}}]\label{thm:Khot's hardness}
        For every $t \in \N$, and for every $K \in \N$ sufficiently large
        (it is enough to take $K = 2^{O(t)}$)
        the following gap problem is $\NP$-hard.
        Given a graph $G = (V,E)$ distinguish between the following two cases:
        \begin{itemize}
        \item
            $\chi_\eq(G) \leq K$.
        \item
            $\alpha(G) \leq \frac{n}{4t^{2t+1}K^{2t}}$.
        \end{itemize}
    \end{theorem}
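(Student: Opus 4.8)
The plan is to establish this gap-hardness result via the standard \emph{Label Cover} machinery that underlies modern chromatic-number inapproximability. I would begin from the PCP theorem together with Raz's parallel repetition theorem, which yield, for any desired soundness $\delta > 0$, an instance of Label Cover with the projection property: a bipartite constraint graph whose two sides carry label sets of size $R = R(\delta)$, such that in the YES case some labeling satisfies every constraint, while in the NO case every labeling satisfies at most a $\delta$-fraction of the constraints. The number of repetitions needed to drive $\delta$ down to roughly $K^{-2t}$ is what ultimately forces the label-set size, and hence $K$, to be at least $2^{\Omega(t)}$; this is the source of the $K \geq 2^{O(t)}$ hypothesis.

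The heart of the construction is a gadget that turns this Label Cover instance into a single graph $G$ whose vertices are indexed by (constraint-vertex, block-of-labels) pairs, with edges placed so as to encode both intra-block inconsistency and cross-constraint inconsistency under the projections $\pi$. I would engineer the gadget so that a globally consistent labeling of the Label Cover instance induces a partition of $V(G)$ into $K$ independent sets of \emph{exactly} equal size; symmetrising the gadget over a transitive group action on the label coordinates is the device that makes the resulting coloring balanced, yielding $\chi_\eq(G) \leq K$ in the completeness case. Care is needed here: enough symmetry must be built in that the honest coloring is automatically equitable, without disturbing the edge structure that the soundness argument relies on.

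For soundness I would argue the contrapositive. Suppose $G$ has an independent set $I$ with $|I| > \frac{|V(G)|}{4t^{2t+1}K^{2t}}$. Since an independent set cannot contain two mutually inconsistent gadget vertices, restricting $I$ to each Label-Cover vertex yields a short list of surviving labels, and an averaging argument then shows that on an $\Omega(1/(t^{2t+1}K^{2t}))$-fraction of constraints the surviving lists on the two endpoints are compatible under the projection. Drawing a label uniformly from each surviving list produces a randomized labeling satisfying a $\delta'$-fraction of constraints with $\delta'$ comparable to the density of $I$; tuning the repetition so that $\delta < \delta'$ contradicts the NO-case soundness. The factors $t^{2t+1}$ and $K^{2t}$ track, respectively, the number of label coordinates touched by the $t$-fold product gadget and the roughly $2t$ projection layers over which the union bound must be taken.

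The step I expect to be the main obstacle is the quantitatively tight soundness analysis: pushing the independence bound all the way down to $\frac{1}{4t^{2t+1}K^{2t}}$ requires that a large independent set force an \emph{almost} consistent list-decoding, and controlling the accumulated error through each of the projection layers (so that it stays below the Label-Cover soundness $\delta$) is the delicate core of the argument. Balancing this against the equitability requirement from completeness is what pins down the precise constants, since symmetrisation tends to shrink the independence gap and thus pulls against the soundness bound; reconciling the two is ultimately what forces the $K \geq 2^{O(t)}$ threshold.
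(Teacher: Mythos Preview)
The paper does not prove this statement at all: Theorem~\ref{thm:Khot's hardness} is quoted verbatim as \cite[Theorem~1.6]{Khot01} and used as a black box in the reduction that follows. There is therefore no ``paper's own proof'' to compare your proposal against; the authors simply invoke Khot's result and build their acquaintance-time reduction on top of it.

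Your sketch is a plausible high-level outline of how results in this family are proven (Label Cover from PCP plus parallel repetition, followed by a graph-product gadget with a list-decoding soundness analysis), but it is not what is being asked for here, and as written it is too vague to stand as an independent proof: you do not specify the gadget, the group action that yields equitability, or the actual list-decoding argument, and the accounting for the constants $4t^{2t+1}K^{2t}$ is asserted rather than derived. If the intent is to reproduce Khot's argument, you would need to work through the long-code or multilayered inner-PCP construction in detail; if the intent is merely to use the theorem, a citation suffices, which is exactly what the paper does.
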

    We construct a reduction from the problem above to the acquaintance time problem,
    that given a graph $G$ outputs a graph $H$ so that (1) if $\chi_\eq(G) \leq K$, then $\AC(H) \leq t$,
    and (2) if $\alpha(G) \leq \frac{n}{4t^{2t+1}K^{2t}}$, then $\AC(H) \geq 2t$.

    Given a graph $G = (V,E)$ with $n$ vertices, the reduction $r(G)$ outputs a graph $H=(V',E')$ as follows.
    The graph $H$ contains $|V'| = (t+1)n$ vertices,
    partitioned into two parts $V' = V \cup U$, where
    $|V| = n$ and $U = \cup_{i\in [t],j \in [K]} U_{i,j}$
    with $|U_{i,j}| = n/K$ for all $i \in [t],j \in [K]$.
    The vertices $V$ induce the complement graph of $G$.
    For each $i \in [t],j \in [K]$ the vertices of $U_{i,j}$ form an independent set.
    In addition, we set edges between every pair of vertices $(v,u) \in V \times U$
    as well as between every pair of vertices in $(u,u') \in U_{i,j} \times U_{i',j'}$ for all $(i,j) \neq (i',j')$.
    This completes the description of the reduction.

    \paragraph{Completeness:}
    We first prove the completeness part, namely, if $\chi_\eq(G) = K$, then $\AC(H) \leq t$.
    Suppose that the color classes of $G$ are $V = V_1 \cup \dots \cup V_K$
    with $|V_j| = n/K$ for all $j \in [K]$.
    Note that each color class $V_j$ induces a clique in $H$.
    We show that $\AC(H) \leq t$, which can be achieved as follows:
    For all $i \in [t]$ in the $i$'th round the agents located in vertices $U_{i,j}$
    swap places with the agents in $V_j$ for all $j \in [K]$.
    (This is possible since by the assumption $|U_{i,j}| = |V_j| = \frac{n}{K}$,
    and all vertices of $U_{i,j}$ are connected to all vertices of $V_j$.)

    We next verify that this strategy allows every pair of agents to meet each other.
    Indeed, denoting by $p_v$ the agent sitting originally in vertex $v$
    the only pairs who did not met each other before the first round are
    contained in the following two classes:
    \begin{enumerate}
        \item
            For all $j \neq j' \in [K]$ and for every $v \in V_j$, $v' \in V_{j'}$
            the pair of agents $(p_{v},p_{v'})$.
        \item
            For each $i \in [t]$ and $j \in [K]$ and every $u,u' \in U_{i,j}$
            the pair of agents $(p_{u},p_{u'})$.
    \end{enumerate}
    Then, when the agents move along the prescribed matchings, the pairs
    from the first class meet after the first round.
    And for each round $i \in [t]$ the pairs from the second class
    that correspond to $u,u' \in U_{i,j}$ for some $j \in [K]$
    meet after the $i$'th round.
    This proves the completeness part of the reduction.

    \medskip
    \paragraph{Soundness:}
    For the soundness part assume that $\AC(H) \leq 2t-1$, and consider the
    corresponding $(2t-1)$-rounds strategy for acquaintance in $G$.
    By a counting argument there are $\frac{n}{2}$ agents who
    originally were located in $U$ and visited $V$ at most once.
    By averaging, there are $\frac{n}{2(2t-1)}$ agents who either never visited $V$
    or visited $V$ simultaneously, and this was their only visit to $V$.
    Let us denote this set of agents by $P_0$.
    The following claim completes the proof of the soundness part.
    \begin{claim}
        Let $P_0$ be a set of agents of size $\frac{n}{2(2t-1)}$.
        Suppose they visited $V$ at most once simultaneously,
        and visited $U$ at most $2t$ times.
        If every pair of agents from $P_0$ met each other
        during these rounds, then $\alpha(G) \geq \frac{n}{2(2t-1)(tK)^{2t}} > \frac{n}{4t^{2t+1}K^{2t}}$.
    \end{claim}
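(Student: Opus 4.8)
The plan is to classify the agents of $P_0$ by the \emph{trajectory} they follow through the blocks of $H$, and to read off a large independent set of $G$ from the most populous trajectory class. First I would fix the single round $r^*$ at which the simultaneous visits of $P_0$ to $V$ occur, so that at every other time point each agent of $P_0$ sits somewhere in $U$. To each agent $p \in P_0$ I associate the sequence recording, at each of the $2t$ time points, whether $p$ is in $V$ (possible only at $r^*$) and, when in $U$, which block $U_{i,j}$ it occupies. Since $U$ has $tK$ blocks, the number of distinct trajectories is at most $(tK)^{2t}$, up to the lower-order correction from the single extra ``in $V$'' option at round $r^*$, which will be absorbed by the slack in the target inequality. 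By pigeonhole some trajectory is shared by a set $Q \subseteq P_0$ of at least $|P_0|/(tK)^{2t} = \frac{n}{2(2t-1)(tK)^{2t}}$ agents.

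Next I would prove a dichotomy for agents that share a trajectory. Two such agents occupy the same block at every time point, and whenever that common block lies in $U$ they are \emph{non}-adjacent in $H$, since each $U_{i,j}$ is an independent set. Hence if the shared trajectory never enters $V$, the agents are co-located in a single independent block throughout and never become acquainted; as every pair of $P_0$ agents is assumed to meet, such a class can contain at most one agent. In the regime of interest ($n$ large) the class $Q$ has far more than one agent, so its trajectory must enter $V$ at round $r^*$; the degenerate case where $Q$ is a singleton forces $\frac{n}{2(2t-1)(tK)^{2t}} \le 1$, for which a single vertex already witnesses the bound.

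It then remains to extract the independent set. At round $r^*$ all agents of $Q$ sit in $V$, necessarily on \emph{distinct} vertices, while at every other time point they are gathered in one $U$-block in which no two of them are adjacent. Thus $r^*$ is the \emph{only} moment at which two agents of $Q$ can possibly meet, and by hypothesis they all do. Because the vertex set $V$ induces the complement of $G$ inside $H$, two $V$-vertices are adjacent in $H$ precisely when they are non-adjacent in $G$; therefore the fact that the agents of $Q$ pairwise meet at $r^*$ says exactly that the $|Q|$ vertices they occupy are pairwise non-adjacent in $G$, i.e.\ form an independent set. This gives $\alpha(G) \ge |Q| \ge \frac{n}{2(2t-1)(tK)^{2t}}$, and since $2(2t-1) < 4t$ we obtain $\frac{n}{2(2t-1)(tK)^{2t}} > \frac{1}{4t^{2t+1}K^{2t}}$, as claimed.

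The main obstacle is the dichotomy of the second paragraph: everything rests on the fact that \emph{sharing a trajectory forbids meeting outside of $V$}, and this relies on two structural features of the reduction, namely that every block $U_{i,j}$ is an independent set, and that the simultaneity hypothesis pins all $V$-visits of $P_0$ to the single round $r^*$, so that two same-trajectory agents are never split between $V$ and $U$ at any one time (which would make them adjacent through the complete bipartite edges between $V$ and $U$). Once this is secured, the pigeonhole step and the translation of ``meeting in $V$'' into ``non-edge of $G$'' are routine, and the only remaining bookkeeping is to verify that the minor over-count at round $r^*$ is swallowed by the gap between $\frac{1}{2(2t-1)(tK)^{2t}}$ and $\frac{1}{4t^{2t+1}K^{2t}}$.
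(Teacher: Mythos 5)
Your proof is correct and takes essentially the same route as the paper's: partition $P_0$ by trajectories through the $tK$ independent blocks of $U$ over the at most $2t$ time points, pigeonhole to get a class of size at least $|P_0|/(tK)^{2t}$ whose members can meet only during the single simultaneous visit to $V$, which forces their vertices there to form a clique of $\overline{G}$, i.e.\ an independent set of $G$ of the required size. If anything you are more careful than the paper, which simply assumes ``for concreteness'' that $P_0$ stays in $U$ until the last round, whereas you explicitly handle the mid-strategy round $r^*$, the never-$V$ trajectories, the singleton-class degeneracy, and the $(tK+1)$-fold overcount at $r^*$ being absorbed by the slack between $2(2t-1)(tK)^{2t}$ and $4t^{2t+1}K^{2t}$.
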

    \begin{proof}
        Let us assume for concreteness that $P_0$ stayed in $U$ until the last round,
        and then moved to $V$.
        Associate with each agent a sequence of sets $U_{i,j}$ of length $2t$
        which he visited during the first $2t-1$ rounds.
        This defines a natural partition of $P_0$ into $(tK)^{2t}$
        clusters, where the agents are in the same cluster if and only if
        they have the same sequence.
        That is, the agents meet each other if and only if they are in different clusters.
        Thus, at least one of the clusters is of size at least $\frac{|P_0|}{(tK)^{2t}} = \frac{n}{2(2t-1)(tK)^{2t}}$.
        If we assume that every pair of agents from $P_0$ met each other eventually,
        then it must be the case that in the last round each cluster moved to some
        clique in $V$, and in particular $\overline{G}$ contains a clique of size $\frac{n}{2(2t-1)(tK)^{2t}}$.
        The claim follows.
    \end{proof}

    We have shown a reduction from the coloring problem to the acquaintance time problem,
    that given a graph $G$ outputs a graph $H$ so that (1) if $\chi_\eq(G) \leq K$, then $\AC(H) \leq t$,
    and (2) if $\alpha(G) \leq \frac{n}{4t^{2t+1}K^{2t}}$, then $\AC(H) \geq 2t$.
    This completes the proof of Theorem~\ref{thm:NP-hardness}.
\end{proofof}

\subsection{Towards stronger hardness results}

We conjecture that, in fact, a stronger hardness result holds,
compared to the one stated in Theorem~\ref{thm:NP-hardness}.
\begin{conjecture}\label{conj:NP-hardness 1 vs t}
    For every constant $t \in \N$ it is $\NP$-hard to decide
    whether a given graph $G$ has $\AC(G) = 1$ or $\AC(G) \geq t$.
\end{conjecture}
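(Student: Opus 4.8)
The plan is to keep the completeness side of the reduction from Theorem~\ref{thm:NP-hardness} essentially fixed and to move all the work into a stronger soundness analysis, exploiting the rigid combinatorial meaning of the condition $\AC(H)=1$. Observe first that $\AC(H)=1$ holds if and only if there is an involution $\pi$ on the vertices of $H$ supported on a matching (so $\pi$ only swaps adjacent vertices and fixes everything else) such that every pair of vertices lies in $E' \cup \pi(E')$; equivalently, $\pi$ maps every non-edge of $H$ to an edge. Any reduction proving the conjecture must therefore output, on yes-instances, a graph admitting such a single ``universal'' involution, while on no-instances forbidding \emph{any} strategy of length below $t$. The completeness gadget of Theorem~\ref{thm:NP-hardness} already produces such an involution from an equi-coloring $V = V_1 \cup \dots \cup V_K$ of $G$ (swap $U_j \leftrightarrow V_j$), so I would retain that gadget and keep the implication $\chi_\eq(G)\le K \Rightarrow \AC(H)=1$ intact.

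First I would isolate why the present soundness argument only yields $\AC(H) \ge 2$. It uses that the internal pairs of an independent block $U_j$ are \emph{simultaneously} unacquainted and must all become acquainted in the single swap, forcing a clique of size $|U_j|$ in $H$ and hence an independent set larger than $\alpha(G)$ in $G$. Once two or more rounds are allowed this rigidity disappears: the internal pairs of $U_j$ may be acquainted across different rounds, and a single agent parked in $V$ meets all of $U$ at once, so a halving recursion of the kind used in Proposition~\ref{prop:path} threatens to acquaint each block in only $O(\log|U_j|)$ rounds. To rule this out I would try to exploit the capacity constraint that all $K$ blocks compete for the $|V|=n$ positions in $V$ at every round. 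A first attempt is to invoke Proposition~\ref{prop:bottlenecks} with $S$ a separating part of $V$, but this fails immediately because $U$ is complete $K$-partite and hence has no small separator --- which is, of course, deliberate, since high connectivity is exactly what permits $\AC(H)=1$ on yes-instances. I would therefore fall back on a direct potential argument: the bound $\alpha(G) \le 1/(4t^{2t+1}K^{2t})$ of Theorem~\ref{thm:Khot's hardness} is tuned to absorb the factor $(tK)^{2t}$, which is precisely the number of position-clusters a block can split into over $t$ rounds, and the goal would be to show that acquainting every internal pair of even one block in fewer than $t$ rounds forces one of its at most $(tK)^{2t}$ clusters to be a clique of $H$ too large to embed into the tiny independent sets of $G$.

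The step I expect to be the main obstacle --- and the likely reason this is posed as Conjecture~\ref{conj:NP-hardness 1 vs t} rather than proved --- is reconciling the two regimes. A yes-instance must be solvable by one global matching, which forces $H$ to be highly symmetric and densely connected and so to contain no bottleneck whatsoever; yet any $\ge t$ lower bound on no-instances intuitively wants exactly such a bottleneck. Standard gap amplification (graph products, serial or parallel composition) is of no use, because it inflates the yes-value in step with the no-value and thereby destroys the fragile property $\AC(H)=1$; the amplification has to be produced \emph{within a single matching} on the yes-side. The most promising route I see is a layered construction of $t$ near-copies of the present gadget, glued through a shared independent region, arranged so that on yes-instances the $t$ solving matchings use pairwise disjoint edges and therefore merge into one matching (preserving $\AC(H)=1$), while on no-instances the agents of the different layers are forced to be processed one layer per round. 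Designing the gluing so that these opposite demands coexist --- fully parallelizable in a single shot on yes-instances yet strictly serial on no-instances --- is the technical heart of the problem, and I do not see how to achieve it unconditionally.
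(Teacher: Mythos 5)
You have correctly identified the crux --- and, tellingly, you end without a proof, which matches the status of this statement in the paper: it is posed as Conjecture~\ref{conj:NP-hardness 1 vs t} and is \emph{not} proved there unconditionally. What the paper does instead is exactly the move your proposal circles around but never makes. Rather than redesigning the gadget (your ``layered'' construction glued through a shared independent region), the paper keeps the $t=1$ reduction of Theorem~\ref{thm:NP-hardness} verbatim --- so completeness, $\chi_\eq(G) \leq K \Rightarrow \AC(H)=1$, is untouched, as you wanted --- and shifts the entire soundness burden onto a \emph{stronger hypothesis about the source problem}. It defines $G$ to be $(\beta,t)$-intersecting if for every $t$ subsets $S_1,\dots,S_t \seq V$ of size $\beta n$ and bijections $\pi_i : S_i \to [\beta n]$ there is a pair $(j,k)$ all of whose preimages $(\pi_i^{-1}(j),\pi_i^{-1}(k))$ are edges of $G$, and conjectures (Conjecture~\ref{conj:(a,t)-intersecting}) that distinguishing $\chi_\eq(G) \leq K$ from $G$ being $(1/K^t,t)$-intersecting is $\NP$-hard. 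This property is tailored to kill multi-round strategies: a block of mutually unacquainted agents visiting $V$ fewer than $t$ times induces at most $t$ placements (bijections) into $V$, and since $V$ carries the complement of $G$ inside $H$, intersecting-ness guarantees a pair of agents whose positions form a non-edge of $H$ on \emph{every} visit, so they never meet. The paper then states that Conjecture~\ref{conj:(a,t)-intersecting} implies Conjecture~\ref{conj:NP-hardness 1 vs t}, with the analysis analogous to the general-$t$ soundness of Theorem~\ref{thm:NP-hardness}; the conjecture itself remains open.

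The concrete gap in your proposal, beyond the honest admission at the end, is the middle step: you hope that Khot's bound $\alpha(G) \leq \frac{1}{4t^{2t+1}K^{2t}}$ from Theorem~\ref{thm:Khot's hardness} is ``tuned to absorb'' the $(tK)^{2t}$ clusters and can be pushed to certify $\AC(H) \geq t$ against the $t=1$ gadget. A small independence number controls only a \emph{single} placement of a cluster into $V$ --- it forces one large clique in $\overline{G}$ --- whereas defeating a $(t-1)$-round strategy requires simultaneous control over several distinct placements, which is exactly what the $(\beta,t)$-intersecting condition encodes and what $\alpha(G)$ small does not imply. The paper's remark after Conjecture~\ref{conj:(a,t)-intersecting} makes this concrete: Khot's hard instances are bounded-degree graphs, which are suspected not to be $(\beta,t)$-intersecting for arbitrarily small $\beta$ even when $t=2$, so no argument routed through Theorem~\ref{thm:Khot's hardness} alone, of the kind you sketch, can be expected to close the gap. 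In short: your diagnosis of why the $t=1$ soundness collapses for $t \geq 2$ is accurate and agrees with the paper, but the paper's resolution is to formalize the missing combinatorial property as a new (conjecturally hard) gap problem, not to amplify within the reduction --- and neither you nor the paper proves the statement outright.
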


Below we describe a gap problem similar in spirit to the hardness
results of Lund and Yanakakis and that of Khot whose $\NP$-hardness
implies Conjecture~\ref{conj:NP-hardness 1 vs t}.
In order to describe the gap problem we need the following definition.
\begin{definition}
    Let $t \in \N$ and $\beta > 0$.
    A graph $G = (V,E)$ is said to be {\em $(\beta,t)$-intersecting}
    if for every $t$ subsets (not necessarily disjoint) of the vertices $S_1,\dots,S_t \seq V$
    of size $\beta n$ and for every $t$ bijections $\pi_i : S_i \to [\beta n]$
    there exist $j,k \in [\beta n]$ such that
    all pre-images of the pair $(j,k)$ are edges in $E$,
    i.e., for all $i \in [t]$ it holds that $(\pi_i^{-1}(j),\pi_i^{-1}(k)) \in E$.
\end{definition}

    Note that a graph $G$ is $(\beta,1)$-intersecting if and only if
    $G$ does not contain an independent set of size $\beta n$.
    In addition, note that if $G$ is $(\beta,t)$-intersecting
    then it is also $(\beta',t')$-intersecting for $\beta' \geq \beta$ and $t' \leq t$,
    and in particular $\alpha(G) < \beta$.

    We remark without proof that the problem of deciding whether a given graph $G$
    is $(\beta,t)$-intersecting is $\coNP$-complete.
    We make the following conjecture regarding $\NP$-hardness of distinguishing between
    graphs with small chromatic number and $(\beta,t)$-intersecting graphs.

\begin{conjecture}\label{conj:(a,t)-intersecting}
    For every $t \in \N$ and for all $K \in \N$ sufficiently large
    it is $\NP$-hard to distinguish between the following two cases
    for a given graph $G = (V,E)$:
    \begin{itemize}
    \item
        $\chi_\eq(G) \leq K$.
    \item
        The graph $G$ is $(1/K^t,t)$-intersecting.
    \end{itemize}
\end{conjecture}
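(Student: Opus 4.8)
First I would restate the target property in a form that is easier to attack. Unwinding the definition, $G$ \emph{fails} to be $(\beta,t)$-intersecting exactly when there are subsets $S_1,\dots,S_t \seq V$ of size $\beta n$ and bijections $\pi_i : S_i \to [\beta n]$ so that every label pair $(j,k)$ is a non-edge in at least one coordinate, i.e. $(\pi_i^{-1}(j),\pi_i^{-1}(k)) \notin E$ for some $i$. Forming the tuples $w_j = (\pi_1^{-1}(j),\dots,\pi_t^{-1}(j))$, this says precisely that $\{w_1,\dots,w_{\beta n}\}$ is an independent set in the tensor power $G^{\times t}$ that is injective in every coordinate. Hence $G$ is $(\beta,t)$-intersecting if and only if $G^{\times t}$ contains no coordinate-injective independent set of size $\beta n$; equivalently, no family of $t$ labeled partitions of a $\beta n$-subset can ``cover'' all label pairs by non-edges.

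This reformulation immediately shows that the black-box reduction from Khot's theorem (Theorem~\ref{thm:Khot's hardness}) that was used for Theorem~\ref{thm:NP-hardness} is not enough here, because a small independence number does \emph{not} imply the intersecting property. For a concrete obstruction, let $G$ be a disjoint union of $\sqrt{n}$ cliques of size $\sqrt{n}$; then $\alpha(G) = \sqrt{n}$, yet $G$ is not $(1,2)$-intersecting (and hence, by the stated monotonicity, not $(1,t)$-intersecting for any $t \geq 2$): identifying the labels with a $\sqrt{n}\times\sqrt{n}$ grid, let $\pi_1$ group labels by rows and $\pi_2$ by columns, so that two distinct labels lie in a common clique under at most one of the two bijections and every pair is a non-edge in some coordinate. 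Thus the soundness case must provide something strictly stronger than ``$\alpha$ is small'': it must provide pseudorandomness. Indeed, a union-bound calculation shows that $G(n,1/2)$ \emph{is} $(\beta,t)$-intersecting with high probability for constant $\beta,t$, so intersecting graphs do exist, and the right target for the NO case is a pseudorandom/expander-type condition rather than a bound on $\alpha$.

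My plan for the reduction itself would therefore be to take the long-code/Label-Cover construction underlying the coloring-hardness results of Lund--Yanakakis and Khot, arranged so that the YES case yields an equi-$K$-colourable graph (this is the routine direction: honest labelings induce balanced proper $K$-colourings exactly as in the completeness part of Theorem~\ref{thm:NP-hardness}), and to prove a pseudorandom soundness lemma guaranteeing that in the NO case the produced graph is $(1/K^t,t)$-intersecting. The heart of the matter, and what I expect to be the main obstacle, is precisely this soundness lemma: one must rule out \emph{every} coordinate-injective independent set of size $n/K^t$ in the $t$-fold tensor power, equivalently every product-covering of the label pairs, and do so adversarially over all $t$ subsets and all $t$ bijections. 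This is a genuinely multidimensional, correlated condition: as the grid example shows, a clever alignment of the bijections can spread non-edges across coordinates so that no single coordinate ever contains a large independent set, so a coordinate-by-coordinate counting argument (even with a Tur\'{a}n bound on the non-edges inside each $S_i$) fails. I expect that establishing it requires reopening the PCP construction and proving a list-decoding/pseudorandomness style soundness directly at the level of the $t$-fold product, which is substantially stronger than the independence-number soundness currently available from Theorem~\ref{thm:Khot's hardness}; this gap is what keeps the statement a conjecture.
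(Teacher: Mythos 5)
The statement you were asked about is Conjecture~\ref{conj:(a,t)-intersecting}, and the paper does not prove it: it is stated as an open conjecture, accompanied only by a remark explaining why it does not follow from Khot's result (Theorem~\ref{thm:Khot's hardness}), namely that Khot's hard instances are bounded-degree graphs which the authors suspect cannot be $(\beta,t)$-intersecting. Your proposal correctly ends in the same place --- you do not claim a proof, and your diagnosis of where the difficulty sits matches the paper's. Moreover, your supporting analysis is sound and in one respect sharper than the paper's remark. The reformulation is an accurate unwinding of the definition: $G$ fails to be $(\beta,t)$-intersecting exactly when the tensor power $G^{\times t}$ contains a coordinate-injective independent set of size $\beta n$ (coordinate-injectivity is what makes ``non-adjacent in coordinate $i$'' mean a genuine non-edge rather than a repeated vertex). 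Your grid example also checks out: for $G$ a disjoint union of $\sqrt{n}$ cliques of size $\sqrt{n}$, grouping the labels by rows under $\pi_1$ and by columns under $\pi_2$ ensures that two distinct labels share a clique in at most one coordinate, so every label pair is a non-edge somewhere, and $G$ is not $(1,2)$-intersecting despite $\alpha(G)=\sqrt{n}=o(n)$. This is a cleaner obstruction than the paper's: a bounded-degree graph with degree $d$ fails to be $(\beta,1)$-intersecting as soon as $\beta \leq 1/(d+1)$, simply because $\alpha(G) \geq n/(d+1)$, so the paper's suspicion partly reduces to the $t=1$ condition; your graph, by contrast, \emph{is} $(\beta,1)$-intersecting for every constant $\beta$, yet fails at $t=2$, which isolates the genuinely multi-coordinate nature of the obstacle and shows that no soundness guarantee phrased solely in terms of $\alpha(G)$ --- which is all Theorem~\ref{thm:Khot's hardness} provides --- can yield the NO case of the conjecture.

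Two smaller points. Your side claim that $G(n,1/2)$ is $(\beta,t)$-intersecting ``by a union bound'' needs more care than stated: the covering events for distinct label pairs are not independent, since a single vertex pair can serve different label pairs in different coordinates; the conclusion is plausible via a more careful exposure or concentration argument, but as written it is a heuristic. And your closing program --- completeness via honest labelings yielding balanced colorings as in Theorem~\ref{thm:NP-hardness}, plus a new list-decoding/pseudorandomness soundness lemma proved at the level of the $t$-fold product inside the PCP construction --- is a reasonable research direction, but it is precisely the part that remains open. So after your proposal the statement has exactly the status it has in the paper: a conjecture, with the gap between ``small independence number'' and ``intersecting'' correctly identified as the reason.
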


\begin{remark}
    Conjecture~\ref{conj:(a,t)-intersecting} does not seem to follow immediately
    from the result of Khot stated in Theorem~\ref{thm:Khot's hardness}.
    One reason for that is due to the fact that Khot's hard instances for the problem
    are bounded degree graphs, and we suspect that such graphs cannot be $(\beta,t)$-intersecting
    for arbitrarily small $\beta>0$ even in the case of $t=2$.
\end{remark}

\begin{theorem}
    Conjecture~\ref{conj:(a,t)-intersecting} implies
    Conjecture~\ref{conj:NP-hardness 1 vs t}.
\end{theorem}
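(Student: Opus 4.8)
The plan is to reduce the gap-coloring problem of Conjecture~\ref{conj:(a,t)-intersecting} to the gap-acquaintance problem of Conjecture~\ref{conj:NP-hardness 1 vs t}, using \emph{exactly} the reduction from the special case $t=1$ of Theorem~\ref{thm:NP-hardness}. Given $G=(V,E)$ on $n$ vertices I build $H=(V',E')$ on $2n$ vertices, $V' = V \cup U$ with $U = U_1 \cup \dots \cup U_K$ and $|U_j| = n/K$, where $V$ induces $\overline{G}$, each $U_j$ is independent, the bipartite graph between $V$ and $U$ is complete, and there are all edges between distinct blocks $U_j,U_{j'}$. I set $\beta = 1/K^t$ and take $K$ to be a large constant depending only on $t$, chosen so that $(1+1/K)^t \le 2$; since $t$ is constant, $H$ is computable in polynomial time. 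Completeness is verbatim the $t=1$ argument: if $\chi_\eq(G) \le K$ with equi-color-classes $V_1,\dots,V_K$ (each a clique of $\overline{G}$ of size $n/K$), the single matching swapping each $U_j$ with $V_j$ acquaints all pairs, so $\AC(H)=1$. The substance is therefore the soundness claim, which I prove in contrapositive form: a strategy of only $t-1$ matchings forces $G$ to fail to be $(1/K^t,t)$-intersecting.

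For the soundness, fix a strategy $M_1,\dots,M_{t-1}$ and consider the $t$ snapshots at times $0,1,\dots,t-1$. To each agent $a$ I associate its \emph{label trajectory} $(b_0(a),\dots,b_{t-1}(a)) \in \{0,1,\dots,K\}^t$, where $b_\tau(a)=0$ if $a$ lies in $V$ after round $\tau$ and $b_\tau(a)=j$ if it lies in $U_j$. The key structural observation is that two agents sharing a trajectory are never acquainted inside $U$ (they sit in the same independent block) nor across a $V$--$U$ edge (they are always on the same side), so they meet \emph{if and only if} at some label-$0$ time their positions lie on non-adjacent vertices of $G$, i.e. adjacent vertices of $\overline{G}$. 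Since there are only $(K+1)^t$ trajectories and $2n$ agents, pigeonhole yields a trajectory class $C$ with $|C| \ge 2n/(K+1)^t \ge n/K^t = \beta n$, the second inequality being exactly where $(1+1/K)^t \le 2$ is used. I then fix $C' \seq C$ with $|C'| = \beta n$.

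Let $\tau_1 < \dots < \tau_s$ (with $s \le t$) be the label-$0$ times of the common trajectory; at each such time the agents of $C'$ occupy $\beta n$ distinct vertices of $V$, defining a set $S_m \seq V$, and, after fixing one enumeration $C' = \{a_1,\dots,a_{\beta n}\}$ used for all $m$, a bijection $\pi_m : S_m \to [\beta n]$ with $\pi_m^{-1}(\ell)$ equal to the position of $a_\ell$ at time $\tau_m$. If $G$ were $(\beta,s)$-intersecting there would be $j \neq k$ with $(\pi_m^{-1}(j),\pi_m^{-1}(k)) \in E$ for every $m$, i.e. a pair $a_j,a_k \in C'$ sitting on $G$-adjacent vertices at every label-$0$ time and hence never acquainted, contradicting the validity of the strategy (the degenerate case $s=0$ gives the contradiction at once, as then no two agents of $C'$ ever meet). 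Thus $G$ is not $(\beta,s)$-intersecting, and since $s \le t$ the monotonicity remark ($(\beta,t)$-intersecting implies $(\beta,t')$-intersecting for $t' \le t$) shows $G$ is not $(1/K^t,t)$-intersecting, as required. Combining completeness with soundness, any distinguisher between $\AC(H)=1$ and $\AC(H) \ge t$ would solve the $\NP$-hard gap problem of Conjecture~\ref{conj:(a,t)-intersecting}, yielding Conjecture~\ref{conj:NP-hardness 1 vs t}.

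The step I expect to be most delicate is aligning the combinatorics of a short strategy with the precise quantifier structure of $(\beta,t)$-intersecting. Three points need care: that the pigeonhole class survives the loss of the $(K+1)^t$ factor and still has size $\ge \beta n$ (forcing $K$ large relative to $t$); that agents sharing a trajectory can indeed be acquainted only through a label-$0$ coordinate, so that ``everyone meets'' translates cleanly into ``no intersecting pair''; and that the extracted data consists of $s \le t$ sets and bijections, so one must route through the monotonicity reduction rather than manufacturing exactly $t$ sets.
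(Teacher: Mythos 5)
Your proof is correct and follows exactly the route the paper indicates for this deliberately omitted proof: the identical reduction from the $t=1$ case of Theorem~\ref{thm:NP-hardness}, with the soundness analysis replacing the small-independence-number argument by the $(1/K^t,t)$-intersecting hypothesis, applied via pigeonhole on trajectory classes to the $s \le t$ label-$0$ snapshots (your handling of $s < t$ through the monotonicity of the intersecting property, and of $s=0$ directly, is the right way to match the quantifier structure). The only technicalities are harmless and correctly anticipated in your closing paragraph: integrality of $\beta n$ and the requirement $(1+1/K)^t \le 2$, both absorbed by the ``$K$ sufficiently large'' clause of Conjecture~\ref{conj:(a,t)-intersecting}.
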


The proof of this implication is analogous to
the proof of Theorem~\ref{thm:NP-hardness}, and we omit it.
The reduction is exactly the same as described in the proof
of Theorem~\ref{thm:NP-hardness} for the special case of $t=1$.
The analysis is similar to the proof of Theorem~\ref{thm:NP-hardness} for general $t \geq 1$,
where instead of using the assumption that $\alpha(G)$ is small
we use the stronger assumption in the NO-case
of Conjecture~\ref{conj:(a,t)-intersecting}.

\section{Graphs with $\AC(G)=1$}\label{sec:AC(G)=1}

In this section we study graphs whose acquaintance time equals 1.
We state some structural results for such graphs, and use them to
give efficient approximation algorithms for $\AC$ on them.
Specifically, for a graph $G$ with $\AC(G)=1$ and a constant $c$ we give a deterministic
algorithm that returns an $n/c$ strategy for acquaintance in $G$
whose running time is $n^{c+ O(1)}$. We also give a randomized
polynomial time algorithm that returns an $O(\log(n))$ strategy for such graphs.
These results appear in Section~\ref{sec:AC=1 algorithms}.

\begin{definition}\label{def:ac=1-structure}
    Let $G=(V,E)$ be a graph, and let
    $V = A \cup B \cup C$ be a partition of the vertices
    with $A = \{a_i\}_{i=1}^k$ and $B = \{b_i\}_{i=1}^k$ for some $k \in \N$.
    The tuple $(A,B,C)$ is called a {\em one-matching-witness for $G$}
    if it satisfies the following conditions.
    \begin{enumerate}
    \item\label{item1:AC=1}
        $(a_i,b_i) \in E$ for all $i \in [k]$.
    \item\label{item2:AC=1}
        Either $(a_i,b_j) \in E$ or $(a_j,b_i) \in E$ for all $i \neq j \in [k]$.
    \item\label{item3:AC=1}
        Either $(a_i,a_j) \in E$ or $(b_i,b_j) \in E$ for all $i \neq j \in [k]$.
    \item\label{item4:AC=1}
        The vertices of $C$ induce a clique in $G$.
    \item\label{item5:AC=1}
        For all $c \in C$ and for all $i \in [k]$ we have either $(c,a_i) \in E$ or $(c,b_i) \in E$.
    \end{enumerate}
    (In items \ref{item2:AC=1}, \ref{item3:AC=1}, and \ref{item5:AC=1}
    the either-or condition is not exclusive.)
\end{definition}

\begin{claim}\label{claim:AC=1}
    A graph $G = (V,E)$ satisfies $\AC(G)=1$
    if and only if it has a one-matching-witness.
\end{claim}
\begin{proof}
    Suppose first that $\AC(G)=1$, and let $M = \{(a_1,b_1),\dots,(a_k,b_k)\}$
    be a matching that witnesses the assertion $\AC(G)=1$.
    Let $A = \{a_i\}_{i=1}^k$, $B = \{b_i\}_{i=1}^k$, and $C = V \setminus (A \cup B)$,
    Then $(A,B,C)$ is a one-matching-witness for $G$.

    For the other direction, if $(A,B,C)$ is a one-matching-witness for $G$, then the matching
    $M = \{(a_1,b_1),\dots,(a_k,b_k)\}$ is 1-round strategy for acquaintance in $G$.
\end{proof}

The following corollary is immediate from Claim~\ref{claim:AC=1}.
\begin{corollary}\label{cor:AC=1}
    Let $G = (V,E)$ be an $n$-vertex graph that satisfies $\AC(G)=1$,
    and suppose that $(A = \{a_i\}_{i=1}^k,B = \{b_i\}_{i=1}^k,C)$
    is a one-matching-witness for $G$. Then,
    \begin{enumerate}
    \item\label{item1:corAC=1}
        For all $i \in [k]$ it holds that $\deg(a_i) + \deg(b_i) \geq 2k + |C| = n$.
    \item\label{item2:corAC=1}
        For all $c \in C$ we have $\deg(c) \geq k+|C|-1 \geq \floor{n/2}$.
    \item\label{item3:corAC=1}
        There are at least $\floor{n/2}$ vertices $v \in V$
        with $\deg(v) \geq \floor{n/2}$.
    \end{enumerate}
\end{corollary}

\begin{claim}\label{claim:matching small to large}
Let $G = (V,E)$ be a graph with $n$ vertices that satisfies $\AC(G)=1$,
and let $U \seq V$ be the set of vertices $v \in V$ such that $\deg(v) \geq \floor{n/2}$.
Then, for every $W \seq V \setminus U$
there exists a matching of size $|W|$ between $U$ and $W$.
\end{claim}

\begin{proof}
    Let $(A,B,C)$ be a one-matching-witness for $G$.
    Note that by Corollary~\ref{cor:AC=1} Item~\ref{item2:corAC=1}
    we have $C \seq U$, and thus $W \seq A \cup B$.
    By Item~\ref{item1:corAC=1} of Corollary~\ref{cor:AC=1} for every $i \in [k]$
    it holds that $\deg(a_i) + \deg(b_i) \geq n$, and thus either $a_i$ or $b_i$
    belongs to $U$. Therefore, the required matching is given by
    $M = \{(a_i, b_i) : \mbox{$i \in [k]$ such that either $a_i \in W$ or $b_i \in W$}\}$.
\end{proof}

The following proposition gives additional details on the structure of graphs with $\AC(G)=1$.
It will be used later for the analysis of a (randomized) approximation algorithm
for acquaintance in such graphs (see Theorem~\ref{thm:1-vs-log(n) in RP}).

\begin{proposition}\label{proposition:E[N(u),N(v)]}
    Let $G = (V,E)$ be a graph with $n$ vertices that satisfies $\AC(G)=1$,
    and let $u,v \in V$ be two vertices of degree at least $n/2$.
    Then, either $|N(u) \cap N(v)| = \Omega(n)$ or $|E[N(u),N(v)]| = \Omega(n^2)$,
    where $E[N(u),N(v)] = \{(a,b)\in E : a \in N(u), b \in N(v)\}$
    denotes the set of edges between $N(u)$ and $N(v)$.
\end{proposition}

\begin{proof}
    If $|N(u) \cap N(v)| \geq 0.1n$, then we are done.
    Assume now that $|N(u) \cap N(v)| < 0.1n$.
    Therefore $|N(u) \cup N(v))| > 0.9n$, as $|N(u)| + |N(v)| \geq n$.
    Define two disjoint sets $N'(u) = N(u) \setminus N(v)$ and $N'(v) = N(v) \setminus N(u)$,
    and note that by disjointness we have $|N'(u)| \geq 0.4n$ and $|N'(v)| \geq 0.4n$.
    It suffices to prove that $|E[N'(u),N'(v)]| = \Omega(n^2)$.

    Suppose that $(A = \{a_i\}_{i=1}^k,B = \{b_i\}_{i=1}^k,C)$
    is a one-matching-witness for $G$.
    Consider the indices $I = \{i \in [k] : a_i,b_i \in N'(u)\cup N'(v)\}$,
    and define a partition $I = I_u \cup I_v \cup I_{u,v}$,
    where $I_u = \{i \in [k] : a_i,b_i \in N'(u)\}$,
    $I_v = \{i \in [k] : a_i,b_i \in N'(v)\}$,
    and $I_{u,v} = I \setminus (I_u \cup I_v)$.
    Also, define $C_u = C \cap N'(u)$, and $C_v = C \cap N'(v)$.
    Note that $|N'(u)| = |C_u| + 2|I_u| + |I_{u,v}|$,
    and analogously $|N'(v)| = |C_v| + 2|I_v| + |I_{u,v}|$.
    Using this partition we have
    \[
        |E[N'(u),N'(v)]| \geq |C_u|\cdot|C_v| + \frac{|I_{u,v}|^2}{2} + |I_u| \cdot |I_v| + |I_u| \cdot |C_v| + |I_v| \cdot |C_u|,
    \]
    where the first term follows from the fact that $C$ induces a clique
    (Definition~\ref{def:ac=1-structure} Item~\ref{item4:AC=1}),
    the second and third terms follow from Item~\ref{item2:AC=1} of Definition~\ref{def:ac=1-structure},
    and the last two terms follow from Item~\ref{item5:AC=1} of Definition~\ref{def:ac=1-structure}.

    Now, if $|I_{u,v}| > 0.2n$, then $|E[N'(u),N'(v)]| \geq \frac{|I_{u,v}|^2}{2} \geq 0.02n^2$, as required.
    Otherwise,  we have $|C_u| + |I_u| \geq 0.1n$ and $|C_v| + |I_v| \geq 0.1n$,
    and therefore $|E[N'(u),N'(v)]| \geq (|C_u| + |I_u|) \cdot (|C_v| + |I_v|) \geq 0.01n^2$,
    as required.
\end{proof}
\subsection{Algorithmic results}\label{sec:AC=1 algorithms}

Recall that (unless $\P = \NP$) there is no polynomial time algorithm that,
when given a graph $G$ with $\AC(G)=1$, finds a 1-round strategy for
acquaintance of $G$. In this section we provide two approximation algorithms
regarding graphs whose acquaintance time equals 1.
In Theorem~\ref{thm:1-vs-n/c in P} we give a deterministic algorithm
that finds an $n/c$-rounds strategy for acquaintance in such graphs whose
running time is $n^{c+O(1)}$.
In Theorem~\ref{thm:1-vs-log(n) in RP} we give a randomized algorithm
that finds an $O(\log(n))$-rounds strategy for acquaintance in such graphs.

We start with the following simple deterministic algorithm.
\begin{proposition}\label{prop:1-vs-n in P}
    There is a deterministic polynomial time algorithm that when given as input
    an $n$-vertex graph $G=(V,E)$ such that $\AC(G)=1$ outputs an
    $n$-rounds strategy for acquaintance in $G$.
\end{proposition}

\begin{proof}
    The algorithm works by taking one agent at a time and finding
    a 1-round strategy that allows this agent to meet all others.
    For each agent $p$ the algorithm works as follows.
    If the location of $p$ is the vertex $v \in V$, then for each
    possible destination $u \in N(v) \cup \{v\}$ for $p$
    the algorithm constructs the bipartite graph $H_{v,u} = (A \cup B, F)$
    where $A = V \setminus (N(v) \cup \{v\})$ and $B = N(u) \setminus \{v\}$,
    and there is an edge $(a,b) \in A \times B$ in $F$ if and only if
    it is contained in $E$. The algorithm then
    looks for a matching of size $|A|$ in $H_{v,u}$.
    Such a matching, if it exists, can be found in polynomial time
    (e.g., using an algorithm for maximum flow).
    We claim below that such a matching, augmented with
    the edge $(v,u)$ if needed (that is, if $v \neq u$),
    gives a 1-round strategy that allows $p$ to meet all other agents.
    Repeating this procedure for all agents gives an
    $n$-rounds strategy for acquaintance in $G$.

    In order to prove correctness of the algorithm
    we claim that for all $v \in V$ there is some $u \in N(v) \cup \{v\}$
    such that the graph $H_{v,u}$ contains a matching of size $|A|$.
    Furthermore, any such matching, augmented with the edge $(v,u)$ if needed,
    gives a 1-round strategy that allows $p$ to meet all other agents.
    Indeed, $H_{v,u}$ contains all edges from the agents who didn't meet
    $p$ before the first round, to the neighbours of $u$.
    Consider a 1-round strategy for acquaintance in $G$, and let $u$
    be the vertex in which $p$ is located after this round.
    This strategy (restricted to the edges of $H_{v,u}$) induces
    a matching of size $|A|$, and so there exists some $u \in N(v) \cup \{v\}$
    such that $H_{v,u}$ contains a matching of size $|A|$.
    To finish the proof of correctness note that any matching of size $|A|$ in $H_{v,u}$
    gives a 1-round strategy that allows $p$ to meet all other agents.
\end{proof}

We modify the proof above to get the following
stronger result.

\begin{theorem}\label{thm:1-vs-n/c in P}
    There is an algorithm that when given as input $c \in \N$ and
    an $n$-vertex graph $G=(V,E)$ with $\AC(G)=1$ outputs an
    $\ceil{n/c}$-rounds strategy for acquaintance in $G$ in time $n^{c+O(1)}$.
\end{theorem}

\begin{proof}
    Modifying the algorithm in the proof of Proposition~\ref{prop:1-vs-n in P}
    we may take $c$ agents at a time, and
    find one matching that allows each of them to meet all agents.
    Given a set $P = \{p_1,\dots,p_c\}$ of agents
    located in the vertices $\{v_1,\dots, v_c\}$ respectively
    the matching is found as follows.

    The algorithm goes over all possible destinations
    $\{ u_i \in N(v_i) \cup \{v_i\} : i \in [c]\}$ for all agents $p_i$.
    If $(v_i,v_j) \notin E$ and $(u_i,u_j) \notin E$ for
    some $i \neq j$, then the agents $p_i$ and $p_j$ do not meet, and
    we skip to the next potential destination.
    Otherwise, define a bipartite graph $H = (A \cup B, F)$ where
    $A = V \setminus ( ( \bigcap_{j \in [c]} N(v_i) )\bigcup \{v_i,u_i : i \in [c]\})$
    and $B = \bigcup_{j \in [c]} N(u_i) \setminus \{v_i,u_i : i \in [c]\}$.
    We add the edge $(a,b) \in A \times B$ to $H$ if the agent
    located originally in $a$ can move to $b$ and meet all agents from $P$
    (who moved in the same round from $v_i$ to $u_i$).
    Formally, we add $(a,b) \in A \times B$ to $H$ iff $(a,b) \in E$
    and for all $i \in [c]$ either $a \in N(v_i)$ or $b \in N(u_i)$ (or both).
    The algorithm then looks for a matching of size $|A|$ in $H$, and
    when found, augments it with $\{(v_i,u_i) : i \in [c]\}$ and adds it to the
    strategy for acquaintance in $G$.

    The correctness is very similar to the argument in the proof of
    Proposition~\ref{prop:1-vs-n in P}, showing that any 1-round strategy
    induces the desired matching in $H$ for some choice of $u_1,\dots,u_c$,
    and that any matching in $H$ gives a 1-round strategy that allows
    each of the agents in $P$ to meet every agent of $G$.
\end{proof}

We now turn to a randomized polynomial time algorithm with the following guarantee.

\begin{theorem}\label{thm:1-vs-log(n) in RP}
    There is a randomized polynomial time algorithm
    which when given a graph $G$ with $\AC(G)=1$
    finds an $O(\log(n))$-rounds strategy for acquaintance in $G$
    with high probability.
\end{theorem}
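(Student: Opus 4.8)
The plan is to design an algorithm that exploits the structure of $\AC(G)=1$ graphs established in Proposition~\ref{prop:AC=1}, Corollary~\ref{cor:AC=1}, and especially Claim~\ref{claim:E[N(u),N(v)]}. The target is an $O(\log n)$-round strategy, so the natural approach is a divide-and-conquer or ``halving'' scheme: in each round we want a constant fraction of the currently-unacquainted pairs to meet, so that after $O(\log n)$ rounds no unacquainted pairs remain. First I would observe that by Corollary~\ref{cor:AC=1} Item~\ref{item4:corAC=1} the graph is dense ($|E| \geq (n^2-1)/4$), and by Item~\ref{item6:corAC=1} all but $O(1)$ vertices have degree $\geq n/2$; let $U \seq V$ denote this set of high-degree vertices. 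The key structural input is Claim~\ref{claim:E[N(u),N(v)]}: for any two high-degree vertices $u,v$, either their neighborhoods overlap in $\Omega(n)$ vertices (so a single random vertex is a common neighbor with constant probability), or there are $\Omega(n^2)$ edges between $N(u)$ and $N(v)$ (so a single random edge connects their neighborhoods with constant probability).

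The main idea I would pursue is to make all high-degree agents meet each other quickly using randomized matchings. The plan is: in each round, sample a uniformly random perfect (or near-perfect) matching $M$ of $G$ and use it as the swap. The intuition is that for any fixed pair of agents $(p_x,p_y)$ that have not yet met, a random matching should bring them into adjacency — or into the neighborhoods described by Claim~\ref{claim:E[N(u),N(v)]} — with constant probability. Concretely, if $p_x$ sits at $u$ and $p_y$ at $v$ with $u,v \in U$, then Claim~\ref{claim:E[N(u),N(v)]} guarantees that a random swap lands them adjacent with probability $\Omega(1)$: in the overlap case a shared neighbor suffices, and in the dense-bipartite case a random edge of $E[N(u),N(v)]$ moves $p_x$ and $p_y$ onto the two endpoints of an edge. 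Thus each unacquainted pair meets in a given round with probability bounded below by a constant $\delta > 0$, independently enough across rounds that after $O(\log n)$ rounds the probability a fixed pair is still unacquainted is at most $(1-\delta)^{O(\log n)} \leq n^{-3}$. A union bound over the at most $\binom{n}{2}$ pairs then shows that with high probability every pair meets, giving the claimed $O(\log n)$ rounds.

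The remaining technical work I would handle is: (i) reducing to high-degree vertices, since by Item~\ref{item6:corAC=1} there are only $O(1)$ low-degree vertices, and their $O(1)$ agents can be made to meet everyone in $O(1)$ extra rounds by a direct argument (or by using Proposition~\ref{prop:matching small to large} to guarantee matchings that move them into the dense core); and (ii) ensuring that the ``random matching'' in each round can actually be sampled in polynomial time and simultaneously services \emph{many} pairs rather than just one fixed pair. For (ii), the cleanest route is to argue per-pair as above and tolerate that a single random matching only helps a fraction of pairs — the independence across the $O(\log n)$ rounds is what drives the union bound, not simultaneous progress within one round.

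The hard part will be establishing the independence (or sufficient negative correlation) needed to iterate the per-round constant success probability across $O(\log n)$ rounds: after the first swap the agents have moved, so the conditional distribution of a pair's positions in later rounds depends on the earlier matchings, and one must verify that the lower bound $\delta$ on the per-round meeting probability continues to hold regardless of where the two agents currently sit. I expect this to require showing that the structural guarantee of Claim~\ref{claim:E[N(u),N(v)]} applies uniformly to \emph{any} pair of current locations in $U$, so that the bound $\Pr[\text{pair } (p_x,p_y) \text{ meets in round } i \mid \text{history}] \geq \delta$ holds pointwise; once that is in hand the rounds behave like independent Bernoulli trials for the purpose of the union bound, and the analysis closes. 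A secondary subtlety is sampling a uniformly random matching efficiently, for which I would either use a standard polynomial-time sampler or replace ``uniform matching'' by an explicitly constructed random matching (e.g.\ a random maximal matching built greedily) for which the same constant lower bound can be verified.
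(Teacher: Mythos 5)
Your high-level architecture (random matchings, per-pair constant meeting probability via Claim~\ref{claim:E[N(u),N(v)]}, $O(\log n)$ repetitions, union bound) is indeed the paper's, but two of your three load-bearing steps have genuine gaps. First, the random matching itself: a \emph{uniformly random perfect matching} gives you no control over the marginal distribution of a vertex's partner --- under a uniform perfect matching the partner of $u$ can be concentrated on a few forced vertices, so nothing like ``$u$ moves to a near-uniform neighbor'' follows, and Claim~\ref{claim:E[N(u),N(v)]} (which compares $|N(u)\cap N(v)|$ or $|E[N(u),N(v)]|$ against \emph{uniform} choices from the neighborhoods) cannot be invoked; moreover, polynomial-time uniform sampling of perfect matchings is not available for general (non-bipartite) graphs, so your ``standard sampler'' does not exist off the shelf. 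The paper instead \emph{constructs} the random matching: each $u \in U$ stays put with probability $1/2$ and otherwise picks a uniformly random neighbor, and conflicts are resolved greedily along a random order of $U$; the analysis (Claim~\ref{claim:prob meeting of a pair}) shows a fixed pair both survives the greedy resolution with constant probability (both appear among the first $T=\Theta(n)$ positions, and each relevant vertex is hit earlier with probability $O(T/n)$), after which $\Pr[(u',v') \in E] \geq \alpha/4$ follows from Claim~\ref{claim:E[N(u),N(v)]} exactly because the moves are uniform over neighborhoods. Second, the dependence problem you correctly flag as ``the hard part'' is not fixable the way you propose: your plan requires the conditional bound to hold at the agents' \emph{current} positions in $U$, but after one swap an agent originally at $u \in U$ may sit at a neighbor $u' \notin U$, where the degree hypothesis of Claim~\ref{claim:E[N(u),N(v)]} fails. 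The paper's resolution is a simple reset trick you are missing: apply the random matching and then apply the \emph{same} matching again, returning every agent to its original position, so the $O(\log n)$ two-round trials are genuinely i.i.d.\ with all agents of $P_U$ back at their original high-degree vertices.

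Your reduction to the high-degree core is also wrong as stated. You claim that by Item~\ref{item6:corAC=1} of Corollary~\ref{cor:AC=1} all but $O(1)$ vertices have degree $\geq n/2$; that item only says the number of vertices of degree at most $d$ is $O(d)$, which for $d = n/2$ is vacuous, and Item~\ref{item3:corAC=1} guarantees only that \emph{at least} $\floor{n/2}$ vertices have high degree --- so $V \setminus U$ can have size $\Theta(n)$, and ``handle the $O(1)$ stragglers in $O(1)$ extra rounds'' does not apply. The paper handles this by partitioning \emph{all} agents into at most $5$ groups of size at most $\floor{|U|/2}$, and for each of the $O(1)$ pairs of groups moving that pair into $U$ in a single round using the matching guaranteed by Proposition~\ref{prop:matching small to large} (with the one leftover agent per application handled by the deterministic method of Theorem~\ref{thm:1-vs-(n-1) in P}), then running the core lemma on each pair. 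Your parenthetical mention of Proposition~\ref{prop:matching small to large} points at the right tool, but the reduction you actually wrote relies on a false count of low-degree vertices.
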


\begin{proof}
    Let $G = (V,E)$ be an $n$-vertex graph, and
    let $U \seq V$ be the set of vertices of degree at least $\floor{n/2}$.
    By Item~\ref{item3:corAC=1} of Corollary~\ref{cor:AC=1} we have $|U| \geq \floor{n/2}$.
    The following lemma describes a key step in the algorithm.
    \begin{lemma}\label{lemma:log(n) AC in U}
        Let $P_U$ be the agents originally located in $U$. Then, there exists
        a polynomial time randomized algorithm that finds an $O(\log(n))$-rounds
        strategy that makes every two agents in $P_U$ meet with high probability.
    \end{lemma}
    Now, consider all the agents $P$ in $G$. For every subset of $P' \seq P$ of size $|P'| \leq |U|$
    we can use the aforementioned procedure to produce an $O(\log(n))$-rounds strategy
    that allows all agents in $P'$ to meet with high probability.
    Let us partition the agents $P$ into a constant number
    $c = \big\lceil\frac{2|V|}{|U|}\big\rceil$ of disjoint subsets
    $P = P_1 \cup \dots \cup P_c$ with at most $\floor{|U|/2}$ agents each in each $P_i$,
    and apply the procedure to each pair $P_i \cup P_j$ separately.
    By Claim~\ref{claim:matching small to large} we can
    transfer any $P_i \cup P_j$ to $U$ in one step.
    When all pairs have been dealt with, all agents have already met each other.
    This gives us an $O(\log(n))$-rounds strategy
    for the acquaintance problem in graphs with $\AC(G)=1$
    that can be found in randomized polynomial time.
\end{proof}

We return to the proof of Lemma~\ref{lemma:log(n) AC in U}.

\begin{proof-of-lemma}{\ref{lemma:log(n) AC in U}}
    We describe a randomized algorithm that finds an $O(\log(n))$-rounds
    strategy that allows every two agents in $P_U$ to meet.
    Consider the following algorithm for constructing a matching $M$.
    \begin{enumerate}
    \item
        Select a random ordering $\sigma : \{1,\dots,|U|\} \to U$ of $U$.
    \item
        Start with the empty matching $M = \emptyset$.
    \item
        Start with an empty set of vertices $S = \emptyset$.
        The set will include the vertices participating in $M$,
        as well as some of the vertices that will not move.
    \item\label{alg:iteration}
        For each $i=1, \dots, |U|$ do
        \begin{enumerate}
        \item
            Set $u_i=\sigma(i)$.
        \item
            Select a vertex $u_i' \in N(u_i) \cup \{u_i\}$ as follows.
            \begin{enumerate}
            \item
                With probability 0.5 let $u_i' =u_i$.
            \item
                With probability 0.5 pick $u_i' \in N(u_i)$ uniformly at random.
            \end{enumerate}
        \item
            If $u_i \notin S$ and $u'_i \notin S$, then \quad // $(u_i,u'_i)$ will be used in the current step
            \begin{enumerate}
            \item
                $S \leftarrow S \cup \{u_i,u'_i\}$. \quad
            \item
                If $u_i \neq u'_i$, then $M \leftarrow M \cup \{(u_i,u'_i)\}$.
            \end{enumerate}
        \end{enumerate}
        \item
            Output $M$.
    \end{enumerate}
    The following claim bounds the probability that a pair of agents
    in $P_U$ meet after a single step of the algorithm.
    \begin{claim}\label{claim:prob meeting of a pair}
        For every $u,v \in U$, let $p_u$ and $p_v$ be the agents located
        in $u$ and $v$ respectively.
        Then, $\Pr[\mbox{The agents $p_u,p_v$ meet after one step}] \geq c$
        for some absolute constant $c>0$ that does not depend on $n$ or $G$ .
    \end{claim}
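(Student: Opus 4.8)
The plan is to lower-bound the probability that $p_u$ and $p_v$ are adjacent \emph{after} the swap, which certainly implies they meet during this round. If $(u,v)\in E$ the two agents are already acquainted before the round, so we may assume $(u,v)\notin E$. Since $u,v\in U$ both have degree at least $n/2$, Claim~\ref{claim:E[N(u),N(v)]} applies and splits the analysis into two cases, in each of which I single out a \emph{choice event} (an event on the selections made in step~1) of constant probability that places $p_u$ and $p_v$ on adjacent target vertices $s$ and $t$. In the first case $|N(u)\cap N(v)|=\Omega(n)$: here I want $p_u$ to land on a common neighbor $s=w\in N(u)\cap N(v)$ while $p_v$ stays at $t=v$; since $w\in N(v)$ these destinations are adjacent, and the choice event ``$u$ moves to some $w\in N(u)\cap N(v)$ and $v$ stays'' has probability at least $\tfrac12\cdot\frac{|N(u)\cap N(v)|}{\deg(u)}\cdot\tfrac12=\Omega(1)$. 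In the second case $|E[N(u),N(v)]|=\Omega(n^2)$: here I want both to move, to $s=u'\in N(u)$ and $t=v'\in N(v)$ with $(s,t)\in E$; conditioned on both moving (probability $\tfrac14$) the pair $(u',v')$ is uniform over $N(u)\times N(v)$, so $\Pr[(u',v')\in E]=|E[N(u),N(v)]|/(\deg(u)\deg(v))=\Omega(1)$. Because we assumed $(u,v)\notin E$, in both cases $u,v,s,t$ are four distinct vertices.

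The heart of the argument is to show that, conditioned on the choice event, the greedy matching of step~\ref{alg:iteration} actually \emph{realizes} these destinations (executing the edge, or leaving the agent fixed) with constant probability. For a vertex $w$ write $\mathrm{In}(w)=\{x\in U:x'=w\}$ for the set of vertices that selected $w$ in step~1. The key estimate is that in-degrees are small on average: every $x\in U$ has $\deg(x)\ge n/2$, so $\Pr[x'=w]\le\frac{1}{2\deg(x)}\le\frac1n$, and summing over the at most $\deg(w)\le n$ vertices $x$ with $w\in N(x)$ gives $\E|\mathrm{In}(w)|\le 1$. Now set $R=\{u,v,s,t\}\cup\mathrm{In}(u)\cup\mathrm{In}(v)\cup\mathrm{In}(s)\cup\mathrm{In}(t)$. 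I claim that if the random ordering $\sigma$ gives $u$ and $v$ the two smallest ranks among the elements of $R$, then both destinations are realized. Indeed, when the earlier of $u,v$ is processed, both it and its (distinct) target lie outside $S$, since every vertex that could have inserted either of them into $S$ belongs to $R$ and is processed later; so its desired edge (or stay) is executed, inserting only its own two endpoints into $S$. When the second of $u,v$ is processed, it and its target are again outside $S$: no vertex of $\mathrm{In}(\cdot)$ precedes it except possibly the first of the two, and that vertex inserted only endpoints distinct from the current pair (using the four-way distinctness above). Hence both agents reach $s$ and $t$.

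Since $\sigma$ is chosen independently of the step-1 selections, conditioning on the choice event leaves $\sigma$ uniform, so the probability that $u,v$ are the two earliest elements of $R$ equals $\frac{2}{|R|(|R|-1)}$, a positive constant as long as $|R|=O(1)$. Finally $|R|\le 4+|\mathrm{In}(u)|+|\mathrm{In}(v)|+|\mathrm{In}(s)|+|\mathrm{In}(t)|$, and conditioning on the choice event only fixes $u',v'$ (forcing at most two extra elements into these sets) while leaving the contributions of all other vertices independent; hence $\E[\,|R|\mid\text{choice event}\,]=O(1)$. By Markov's inequality there is a constant $m_0$ with $\Pr[\,|R|\le m_0\mid\text{choice event}\,]\ge\tfrac12$, and on that event the realization probability is at least $\frac{2}{m_0(m_0-1)}$. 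Multiplying the constant probability of the choice event by $\tfrac12$ and by $\frac{2}{m_0(m_0-1)}$ gives the claimed constant $c>0$. I expect the last step to be the main obstacle: a single desired edge can be blocked by any vertex pointing into one of its endpoints, and it is the averaged bound $\E|\mathrm{In}(w)|\le1$ together with the uniformly random processing order that keeps these conflicts under control.
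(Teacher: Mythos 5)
Your argument is correct, and it reaches the constant by a genuinely different route than the paper. Both proofs begin identically: degrees at least $n/2$ plus Claim~\ref{claim:E[N(u),N(v)]} give a constant-probability event on the step-1 selections under which $p_u$ and $p_v$ are headed to adjacent positions (the paper compresses your two cases into the single bound $\Pr[(u',v')\in E]\geq \alpha/4$). The difference is in how one shows the greedy pass of step~\ref{alg:iteration} actually executes the two desired moves. The paper conditions on $u$ and $v$ both landing among the first $T$ positions of $\sigma$ (probability at least $T^2/4n^2$), union-bounds by $3T/n$ the probability that any of $u,u',v,v'$ was already placed in $S$, and tunes $T=\Theta(\alpha n)$, yielding $c=\Omega(\alpha^3)$. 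You instead introduce the random blocking set $R=\{u,v,s,t\}\cup\mathrm{In}(u)\cup\mathrm{In}(v)\cup\mathrm{In}(s)\cup\mathrm{In}(t)$, observe that only members of $R$ can insert any of $u,v,s,t$ into $S$, and condition on $u,v$ being the two $\sigma$-earliest elements of $R\cap U$, after which realization is \emph{deterministic}; the in-degree bound $\E|\mathrm{In}(w)|=O(1)$ is the same use of $\deg\geq n/2$ as the paper's estimate $\Pr[u'_{i'}=w]\leq 2/n$, and the Markov step is sound because the choice event depends only on $(u',v')$ while $\sigma$ and the remaining selections stay independent. Your route avoids any parameter tuning and gives a better constant, roughly $\Omega(\alpha)$ instead of $\Omega(\alpha^3)$. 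Two cosmetic slips, neither damaging: in your first case $t=v$, so $u,v,s,t$ are only three distinct vertices --- the disjointness you actually use, $\{u,s\}\cap\{v,t\}=\emptyset$, still holds, and in the second case $s\neq t$ follows from $(s,t)\in E$ and simplicity of $G$ rather than from $(u,v)\notin E$; and $\E|\mathrm{In}(w)|\leq 1$ should be $\leq 3/2$ once $w$'s own possible self-selection is counted (and the probability $2/(|R|(|R|-1))$ is really over $|R\cap U|$, which only helps), none of which affects the conclusion.
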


    In order to achieve an $O(\log(n))$-rounds
    strategy that allows every two agents in $P_U$ to meet
    apply the matching constructed above, and then return the agents
    to their original positions (by applying the same matching again).
    Repeating this random procedure independently $\ceil{\frac{3\log(n)}{c}}$ times
    will allow every pair of agents to meet with probability at least $1/n^3$.
    Therefore, by a union bound all pairs of agents $p_u,p_v \in P_U$
    will meet with probability at least $1/n$.
    This completes the proof of Lemma~\ref{lemma:log(n) AC in U}.
\end{proof-of-lemma}

\begin{proof-of-claim}{\ref{claim:prob meeting of a pair}}
    We claim first that for every $i \leq |U|$ and for every vertex $w \in U \cup N(U)$
    the probability that in step~\ref{alg:iteration} of the algorithm the vertex $w$
    has been added to $S$ before the $i$'th iteration is upper bounded by $3i/n$. Indeed,
    \begin{eqnarray*}
        \Pr[\exists j < i \mbox{ such that } w \in \{u_{j},u'_{j}\}]
            & \leq & \Pr[w \in \{ u_{j} : j < i\}] + \Pr[w \in \{ u'_{j} : j < i\}] \\
            & \leq & \frac{i}{n} + \sum_{j=1}^i \Pr[u'_{j}=w] \\
            & \leq & \frac{i}{n} + i \cdot 1/\floor{n/2} \\
            \mbox{[for $n>1$]} & \leq & \frac{4i}{n}
    \end{eqnarray*}
    where the bound $\Pr[u'_{j}=w] \leq 1/\floor{n/2}$ follows from the
    assumption that $\deg(u_{j}) \geq \floor{n/2}$ for all $u_{j}\in U$, and hence,
    the probability of picking $u'_{j}$ to be $w$ is $1/\deg(u_{j}) \leq 1/\floor{n/2}$.

    Let $T \in \{2,\dots,|U|\}$ be a parameter to be chosen later.
    Now, let $i \leq |U|$ be the (random) index such that $\sigma(i) = u$,
    and let $j \leq |U|$ be the (random) index such that $\sigma(j) = v$.
    Then,
    \[
        \Pr[ i \leq T \mbox{ and } j \leq T ]
        = \frac{{T \choose 2} \cdot (|U|-2)!}{|U|!}
        = \frac{T(T-1)}{2 \cdot |U| \cdot (|U|-1)} \geq \frac{T^2}{4n^2}.
    \]
    Conditioning on this event, the probability that either $u$ or $u'$
    have been added to $S$ before iteration $i$ is upper bounded by
    $\frac{4i}{n} \leq \frac{4T}{n}$,
    and similarly the probability that either $v$ or $v'$
    have been added in $S$ before iteration $j$ is at most $\frac{4T}{n}$.
    Therefore, with probability at least $\frac{T^2}{4n^2} \cdot (1-\frac{8T}{n})$
    both $(u,u')$ and $(v,v')$ will be used in the current step.
    Therefore,
    \[
        \Pr[\mbox{the agents $p_u,p_v$ meet after one step}]
        \geq \frac{T^2}{4n^2} \cdot (1-\frac{8T}{n}) \cdot \Pr[(u',v') \in E].
    \]
    In order to lower bound $\Pr[(u',v') \in E]$ we use Proposition~\ref{proposition:E[N(u),N(v)]},
    saying that for every two vertices $u,v \in U$ it holds that either
    $N(u) \cap N(v) \geq \alpha n$ or $|E[N(u),N(v)]| \geq \alpha \cdot n^2$ for some
    constant $\alpha > 0$ that does not depend on $n$ or $G$.
    Therefore, for every $u,v \in U$ it holds that $\Pr[(u',v') \in E] \geq \alpha/4$.
    Letting $T = \alpha n/12$ we get that
    $\Pr[\mbox{the agents $p_u,p_v$ meet after one step}] = \Omega(\alpha^3)$,
    as required.
\end{proof-of-claim}

\section{Other Variants and Open Problems}\label{sec:other variants}

There are several variants of the problem that one may consider.
\begin{enumerate}
\item
    The problem of maximizing the number of pairs that meet when
    some predetermined number $t \in \N$ of matchings is allowed.
    Clearly, this problem is also $\NP$-complete, even in the case $t=1$.
\item
    There are two graphs $G$ and $H$. The agents move along
    matchings of $H$, but meet if they share an edge in $G$.
    In particular, this looks natural if $H$ is contained in $G$.
\item
    Instead of choosing a matching in each round,
    one may choose a vertex-disjoint collection of cycles,
    and move agents one step along the cycle. This is a generalization
    of the problem discussed in this paper, where we allow only
    collections of 2-cycles.
\end{enumerate}

\medskip
\noindent
One may also consider a more game-theoretic variant of the problem:
Let $G=(V,E)$ be a fixed graph with one agent sitting in each vertex of $G$.
In each round every agent $p_u$ sitting in a vertex $u \in V$ chooses
a neighbor $u' \in N(v)$ according to some strategy.
Then, for every edge $(v,w) \in E$ the agents $p_v$ and $p_w$
swap places if the choice of the agent $p_v$ was $w$ and
the choice of $p_w$ was $v$.
Suppose that the graph is known, but the agents have no information
regarding their location in the graph
(e.g., $G$ is an unlabeled vertex transitive graph).
Find an optimal strategy for the agents so that
everyone will meet everyone else as quickly as possible.
The question also makes sense in the case where the graph
is not known to the agents.

\medskip
We conclude with a list of open problems.

\begin{problem}
    Find $\AC$ of the Hypercube graph.
    Recall that $\AC(\rm{Hypercube})$ is between $\Omega(n/\log(n))$ and $O(n)$,
    where the lower bound is trivial from the number of edges,
    and the upper bound follows from Hamiltonicity of the graph
    (Corollary~\ref{cor:hamiltonian}).
\end{problem}

\begin{problem}
    Prove Conjecture~\ref{conj:NP-hardness 1 vs t},
    namely, that for every constant $t \in \N$ it is $\NP$-hard to decide
    whether a given graph $G$ has $\AC(G) = 1$ or $\AC(G) \geq t$.
    Recall that it follows from Conjecture~\ref{conj:(a,t)-intersecting}.
\end{problem}

\begin{problem}
    Prove stronger inapproximability results. Is it true that $\AC$ is hard
    to approximate within a factor of $\log(n)$? How about $n^{0.01}$? How
    about $n^{0.99}$? Note that the upper bound $\AC(G) \leq  n^2/\Delta$
    from Claim~\ref{claim:AC-bound maxdeg} together the lower bound of
    $\AC(G) \geq  n/\Delta$ gives an $O(n)$-approximation algorithm
    for the problem.
\end{problem}

\begin{problem}
    Give a polynomial time algorithm that given a graph $G$
    outputs a graph $H$ such that $\AC(H) = f(\AC(G))$ for a
    super-linear function $f:\N \to \N$ (e.g., $f(n) = n^2$).
    Such an algorithm can be useful for hardness
    of approximation results for the $\AC$ problem.
\end{problem}

\begin{problem}
    Derandomize the algorithm given in the proof of
    Theorem~\ref{thm:1-vs-log(n) in RP}.
\end{problem}

\begin{problem}
    Give a structural result regarding graphs with small constant
    values of $\AC(G)$ similar to Claim~\ref{claim:AC=1}.
    Also, is there an efficient $O(\log(n))$-approximation algorithm
    for such graphs?
\end{problem}

\section*{Acknowledgments}
We are grateful to the anonymous referees for many useful comments.
We especially wish to thank the referee who suggested modifying Proposition~\ref{prop:1-vs-n in P}
to obtain Theorem~\ref{thm:1-vs-n/c in P}.
\bibliographystyle{alpha}
\bibliography{ac}

\newcommand{\etalchar}[1]{$^{#1}$}
\begin{thebibliography}{CDG{\etalchar{+}}14}

\bibitem[ACG94]{ACG}
N.~Alon, F.~R.~K. Chung, and R.~L. Graham.
\newblock Routing permutations on graphs via matchings.
\newblock {\em SIAM Journal on Discrete Mathematics}, 7(3):513--530, 1994.

\bibitem[AS13]{AS}
O.~Angel and I.~Shinkar.
\newblock A tight upper bound on acquaintance time of graphs.
\newblock {\em Available from {\em http://arxiv.org/abs/1307.6029}}, 2013.

\bibitem[BHK04]{BHK}
A.~Bj\"{o}rklund, T.~Husfeldt, and S.~Khanna.
\newblock Approximating longest directed paths and cycles.
\newblock In {\em Proceedings of the 31st International Colloquium on Automata,
  Languages and Programming}, pages 222--233, 2004.

\bibitem[BST13]{BST}
I.~Benjamini, I.~Shinkar, and G.~Tsur.
\newblock Acquaintance time of a graph.
\newblock {\em Available from {\em http://arxiv.org/abs/1302.2787}}, 2013.

\bibitem[CDG{\etalchar{+}}14]{CDGKKP}
J.~Czyzowicz, D.~Dereniowski, L.~Gasieniec, R.~Klasing, A.~Kosowski, and
  D.~Pajak.
\newblock Collision-free network exploration.
\newblock In {\em Proceedings of the 11th Latin American Theoretical
  INformatics Symposium}, 2014.

\bibitem[Che09]{Ch}
N.~Chen.
\newblock On the approximability of influence in social networks.
\newblock {\em SIAM Journal on Discrete Mathematics}, 23(5):1400--1415, 2009.

\bibitem[HHL88]{HHL}
S.~T. Hedetniemi, S.~M. Hedetniemi, and A.~Liestman.
\newblock A survey of gossiping and broadcasting in communication networks.
\newblock {\em Networks}, 18(4):319--349, 1988.

\bibitem[Kho01]{Khot01}
S.~Khot.
\newblock Improved inaproximability results for maxclique, chromatic number and
  approximate graph coloring.
\newblock In {\em Proceedings of the 42nd IEEE Symposium on Foundations of
  Computer Science}, pages 600--609, 2001.

\bibitem[KKT03]{KKT}
D.~Kempe, J.~Kleinberg, and \'{E}. Tardos.
\newblock Maximizing the spread of influence through a social network.
\newblock In {\em Proceedings of the 9th ACM SIGKDD International Conference on
  Knowledge Discovery and Data Mining}, pages 137--146, 2003.

\bibitem[KMP13]{KMP}
W.B. Kinnersley, D.~Mitsche, and P.~Pra{\l}at.
\newblock A note on the acquaintance time of random graphs.
\newblock {\em The Electronic Journal of Combinatorics}, 20(3), 2013.

\bibitem[LY94]{LY94}
C.~Lund and M.~Yannakakis.
\newblock On the hardness of approximating minimization problems.
\newblock {\em Journal of the ACM}, 41(5):960--981, 1994.

\bibitem[MP13]{MP}
T.~Muller and P.~Pra{\l}at.
\newblock The acquaintance time of (percolated) random geometric graphs.
\newblock {\em Available from {\em http://arxiv.org/abs/1312.7170}}, 2013.

\bibitem[Rei12]{Rei}
D.~Reichman.
\newblock New bounds for contagious sets.
\newblock {\em Discrete Mathematics}, 312(10):1812--1814, 2012.

\end{thebibliography}

\end{document}